\providecommand{\tabularnewline}{\\}
\providecommand{\algorithmname}{Algorithm}
\numberwithin{equation}{section}
\numberwithin{figure}{section}
\theoremstyle{plain}
\newtheorem{thm}{\protect\theoremname}
  \theoremstyle{definition}
  \newtheorem{defn}[thm]{\protect\definitionname}
  \theoremstyle{plain}
  \newtheorem{lem}[thm]{\protect\lemmaname}
  \providecommand{\definitionname}{Definition}
  \providecommand{\lemmaname}{Lemma}
\providecommand{\theoremname}{Theorem}
\begin{document}
\global\long\def\R{\mathbb{R}}
 \global\long\def\Z{\mathbb{Z}}

\global\long\def\ellOne{\ell_{1}}
 \global\long\def\ellTwo{\ell_{2}}
 \global\long\def\ellInf{\ell_{\infty}}

\global\long\def\boldVar#1{\mathbf{#1}}
\global\long\def\mvar#1{\boldVar{#1}}
\global\long\def\vvar#1{\vec{#1}}


\global\long\def\defeq{\stackrel{\mathrm{{\scriptscriptstyle def}}}{=}}
\global\long\def\E{\mathbb{\mathbf{{E}}}}
\global\long\def\otilde{\tilde{O}}


\global\long\def\gradient{\bigtriangledown}
 \global\long\def\grad{\gradient}
 \global\long\def\hessian{\gradient^{2}}
 \global\long\def\hess{\hessian}
 \global\long\def\jacobian{\mvar J}
 \global\long\def\gradIvec#1{\vvar{f_{#1}}}
 \global\long\def\gradIval#1{f_{#1}}

 \global\long\def\setVec#1{\onesVec_{#1}}
 \global\long\def\indicVec#1{\onesVec_{#1}}

\global\long\def\specGeq{\succeq}
 \global\long\def\specLeq{\preceq}
 \global\long\def\specGt{\succ}
 \global\long\def\specLt{\prec}

\global\long\def\innerProduct#1#2{\big\langle#1 , #2 \big\rangle}
 \global\long\def\norm#1{\big\|#1\big\|}
 \global\long\def\normA#1{\norm{#1}_{\ma}}
 \global\long\def\normInf#1{\norm{#1}_{\infty}}
 \global\long\def\normOne#1{\norm{#1}_{1}}
 \global\long\def\normTwo#1{\norm{#1}_{2}}

\global\long\def\OPT{\mathrm{opt}}
\global\long\def\opt#1{#1^{\OPT}}

\global\long\def\fopt{f^{*}}

\global\long\def\ma{\mvar A}
 \global\long\def\mb{\mvar B}
 \global\long\def\mc{\mvar C}
 \global\long\def\md{\mvar D}
 \global\long\def\mf{\mvar F}
 \global\long\def\mg{\mvar G}
 \global\long\def\mh{\mvar H}
\global\long\def\mI{\mvar I}
 \global\long\def\mm{\mvar M}
 \global\long\def\mq{\mvar Q}
 \global\long\def\mr{\mvar R}
 \global\long\def\ms{\mvar S}
 \global\long\def\mt{\mvar T}
 \global\long\def\mU{\mvar U}
 \global\long\def\mv{\mvar V}
 \global\long\def\mw{\mvar W}
 \global\long\def\mx{\mvar X}
 \global\long\def\my{\mvar Y}
\global\long\def\mz{\mvar Z}
 \global\long\def\mproj{\mvar P}
 \global\long\def\mSigma{\mvar{\Sigma}}
 \global\long\def\mLambda{\mvar{\Lambda}}
  \global\long\def\mzero{\mvar 0}
\global\long\def\mlap{\mvar{\mathcal{L}}}
\global\long\def\mpi{\mvar{\mathcal{\Pi}}}

\global\long\def\eps{\mvar{\varepsilon}}

\global\long\def\mdiag{\mvar{\texttt{d}iag}}

\global\long\def\sign{\mvar{\mathsf{\mathsf{sign}}}}

\global\long\def\oracle{\mathcal{O}}
 \global\long\def\simplex{\Delta}

\global\long\def\abs#1{\left|#1\right|}

\global\long\def\ceil#1{\left\lceil #1 \right\rceil }

\global\long\def\time{\mathcal{\mathrm{{EO}}}}

\global\long\def\runtime{\mathcal{\time}}
 \global\long\def\timeOf#1{\runtime\left(#1\right)}

\global\long\def\domain{\mathcal{D}}

\global\long\def\argmin{\mathrm{argmin}}
\global\long\def\nnz{\mathrm{nnz}}
\global\long\def\vol{\mathrm{vol}}
\global\long\def\supp{\mathrm{supp}}
\global\long\def\dist{\mathcal{D}}

\title{Subquadratic Submodular Function Minimization}

\author{Deeparnab Chakrabarty\thanks{Microsoft Research, India. Email: deeparnab@gmail.com}
\and Yin Tat Lee\thanks{Microsoft Research, Email: yile@microsoft.com}
\and Aaron Sidford\thanks{Stanford University. Email: sidford@stanford.edu}
\and Sam Chiu-wai Wong \thanks{UC Berkeley. Email: samcwong@berkeley.edu}}
\maketitle
\begin{abstract}
Submodular function minimization (SFM) is a fundamental discrete optimization
problem which generalizes many well known problems, has applications
in various fields, and can be solved in polynomial time. Owing to
applications in computer vision and machine learning, fast SFM algorithms
are highly desirable. The current fastest algorithms \cite{LSW15}
run in $O(n^{2}\log nM\cdot\time+n^{3}\log^{O(1)}nM)$ time and $O(n^{3}\log^{2}n\cdot\time+n^{4}\log^{O(1)}n$)
time respectively, where $M$ is the largest absolute value of the
function (assuming the range is integers) and $\time$ is the time
taken to evaluate the function on any set. Although the best known
lower bound on the query complexity is only $\Omega(n)$ \cite{Har08},
the current shortest non-deterministic proof \cite{Cun85} certifying
the optimum value of a function requires $\Omega(n^{2})$ function
evaluations. 

\medskip{}

The main contribution of this paper are \emph{subquadratic} SFM algorithms.
For integer-valued submodular functions, we give an SFM algorithm
which runs in $O(nM^{3}\log n\cdot\time)$ time giving the first \emph{nearly
linear} time algorithm in any known regime. For real-valued submodular
functions with range in $[-1,1]$, we give an algorithm which in $\tilde{O}(n^{5/3}\cdot\time/\varepsilon^{2})$
time returns an $\varepsilon$-additive approximate solution. At the
heart of it, our algorithms are projected stochastic subgradient descent
methods on the Lovasz extension of submodular functions where we crucially
exploit submodularity and data structures to obtain fast, i.e. sublinear
time subgradient updates. The latter is crucial for beating the $n^{2}$
bound \textendash{} we show that algorithms which access only subgradients
of the Lovasz extension, and these include the empirically fast Fujishige-Wolfe
heuristic \cite{Wolfe76,Fujishige80} and the theoretically best cutting
plane methods \cite{LSW15} , must make $\Omega(n)$ subgradient calls
(even for functions whose range is $\{-1,0,1\}$). 
\end{abstract}
\thispagestyle{empty}\newpage
\clearpage \setcounter{page}{1}

\section{Introduction}

Submodular functions are set functions that prescribe a value to every
subset of a finite universe $U$ and have the following diminishing
returns property: for every pair $S\subseteq T\subseteq U$, and for
every element $i\notin T$, $f(S\cup i)-f(S)\geq f(T\cup i)-f(T)$.
Such functions arise in many applications. For instance, the utility
functions of agents in economics are often assumed to be submodular,
the cut functions in directed graphs or hypergraphs are submodular,
the entropy of a given subset of random variables is submodular, etc.
 Submodular functions have been extensively studied for more than
five decades \cite{Choquet55,E70,Lovasz83,fujishigeB,Mccormick06}.

One of the most important problems in this area is submodular function
minimization (SFM, henceforth) which asks to find the set $S$ minimizing
$f(S)$. Note that submodular functions need not be monotone and therefore
SFM is non-trivial. In particular, SFM generalizes the minimum cut
problem in directed graphs and hypergraphs, and is a fundamental problem
in combinatorial optimization. More recently, SFM has found many
applications in areas such as image segmentation \cite{BVZ01,KKT08,KT10}
and speech analysis \cite{LB10,LB11}. Owing to these large scale
problems, fast SFM algorithms are highly desirable.

We assume access to an \emph{evaluation oracle }for the submodular
function, and use $\runtime$ to denote the time taken per evaluation.
An amazing property of submodular functions is that SFM can  be exactly
solved with polynomial many queries and in polynomial time. This was
first established via the ellipsoid algorithm \cite{GLS81} in 1981,
and the first polynomial combinatorial algorithms were obtained \cite{Cun85,IFF01,S00,IO09}
much later. 

The current fastest algorithms for SFM are by the second, third, and
fourth authors of this paper \cite{LSW15} who give $O(n^{2}\log nM\cdot\time+n^{3}\log^{O(1)}nM)$
time and $O(n^{3}\log^{2}n\cdot\time+n^{4}\log^{O(1)}n)$ time algorithms
for SFM. Here $M$ is the largest absolute value of the integer-valued
function. The former running time is a (weakly) polynomial running
time, i.e. it depends polylogarithmically on $M,$ while the latter
is a strongly polynomial running time, i.e. it does not depend on
$M$ at all. Although good in theory, known implementations of the
above algorithms are slow in practice \cite{FHI06,FI11,Bac13,CJK14}.
A different algorithm, the so-called Fujishige-Wolfe algorithm \cite{Wolfe76,Fujishige80}
seems to have the best empirical performance \cite{Bac13,JLB11,Bilmes15}
among general purpose SFM algorithms. Recently  the Fujishige-Wolfe
algorithm and variants were shown \cite{CJK14,LJ15} to run in $O((n^{2}\cdot\time+n^{3})M^{2})$
time, proving them to be \emph{pseudopolynomial} time algorithms,
that is having running time $O(\poly(n,\time,M))$.

In this paper we also consider approximate SFM. More precisely, for
submodular functions whose values are in the range $[-1,+1]$ (which
is without loss of generality by scaling), we want to obtain \emph{additive
}approximations\footnote{We also show in Appendix~\ref{app:mult_to_add} how to obtain a multiplicative
approximation under a mild condition on $f$. Such a condition is
necessary as multiplicative approximation is ill defined in general.}, that is, return a set $S$ with $f(S)\leq\textrm{\ensuremath{\OPT}}+\eps.$
Although approximate SFM has not been explicitly studied before, previous
works \cite{LSW15,Bac13,CJK14} imply $O(n^{2}\time\log^{O(1)}(n/\eps))$-time
and $O((n^{2}\cdot\time+n^{3})/\eps^{2})$-time algorithms. Table
\ref{tab:time} summarizes the above discussion. 

\begin{table}[h]
\centering{}%
\begin{tabular}{|c||c|c|c|}
\hline 
Regime & {\small{}Previous Best Running Time} & {\small{}Our Result} & {\small{}Techniques}\tabularnewline
\hline 
\hline 
{\small{}Strongly Polynomial} & \multicolumn{2}{c|}{{\small{}$O(n^{3}\log^{2}n\cdot\time+n^{4}\log^{O(1)}n)$ \cite{LSW15}}} & {\small{}Cutting Plane + Dimension Collapsing}\tabularnewline
\hline 
{\small{}Weakly Polynomial} & \multicolumn{2}{c|}{{\small{}$O(n^{2}\log nM\cdot\time+n^{3}\log^{O(1)}nM)$\cite{LSW15}}} & {\small{}Cutting Plane}\tabularnewline
\hline 
{\small{}Pseudo Polynomial} & {\small{}$O((n^{2}\cdot\time+n^{3})M^{2})$\cite{CJK14,LJ15}} & \textbf{\small{}$\tilde{{O}}(nM^{3}\cdot\time)$} & {\small{}See Section~}\ref{sec:technique_overview}\tabularnewline
\hline 
\multirow{1}{*}{{\small{}$\eps$-Approximate}} & \multirow{1}{*}{{\small{}$O(n^{2}\cdot\time/\varepsilon^{2})$ \cite{CJK14,LJ15,Bac13}}} & \textbf{\small{}$\tilde{O}(n^{5/3}\cdot\time/\varepsilon^{2})$} & {\small{}See Section~}\ref{sec:technique_overview}\tabularnewline
\hline 
\end{tabular}\caption{\label{tab:time} {\small{}Running times for minimizing a submodular
function defined on a universe of size $n$ that takes integer values
between $-M$ and $M$ (except for $\varepsilon$-approximate algorithms
we assume the submodular function is real-valued with range in $[-1,1]$).
$\text{EO}$ denotes the time to evaluate the submodular function
on a set.}}
\end{table}

In particular, the best known dependence on $n$ is \emph{quadratic}
even when the exact algorithms are allowed to be pseudopolynomial,
or when the $\eps$-approximation algorithms are allowed to have a
polynomial dependence on $\eps$. This quadratic dependence seems
to be a barrier. For exact SFM, the smallest known non-deterministic
proof \cite{E70,Cun85} that certifies optimality requires $\Theta(n^{2})$
queries, and even for the approximate case, nothing better is known
(see Appendix~\ref{app:approx_sfm_certificates}). Furthermore, in
this paper we \emph{prove} that a large class of algorithms which
includes the Fujishige-Wolfe algorithm\cite{Wolfe76,Fujishige80}
and the cutting planealgorithms of Lee et al.\cite{LSW15}, as stated
need to make $\Omega(n^{2})$ queries. More precisely, these algorithms
do not exploit the full power of submodularity and work even with
the weaker model of having access only to the ``subgradients of the
Lovasz Extension'' where each subgradient takes $\Theta(n)$ queries.
We prove that any algorithm must make $\Omega(n)$ subgradient calls
implying the quadratic lower bound for this class of algorithms. Furthermore,
our lower bound holds even for functions with range $\{-1,0,1\}$,
and so trivially the lower bound also holds for approximate SFM as
well.

\subsection{Our Results}

In this paper, we describe exact and approximate algorithms for SFM
which run in time subquadratic in the dimension $n$. Our first result
is a pseudopolynomial time exact SFM algorithm with\emph{ nearly linear
dependence }on $n$.  More precisely, for any integer valued submodular
function with maximum absolute value $M$, our algorithm returns the
optimum solution in $O(nM^{3}\log n\cdot\time)$ time.\textbf{ }This
has a few consequences to the complexity theory of SFM. First, this
gives a better dependence on $n$ for pseudopolynomial time algorithm.
Second, this shows that to get a super-linear lower bound on the query
complexity of SFM, one need to consider a function with super constant
function values.\footnote{Conversely, \cite[Thm 5.7]{harvey2008matchings} shows that we need
at least $n$ queries of evaluation oracle to minimize a submodular
function with range in $\{0,1,2\}$.} Third, this completes the following picture on the complexity of
SFM: the best known strongly polynomial time algorithms have query
complexity $\tilde{O}(n^{3})$, the best known (weakly) polynomial
time algorithms have query complexity $\tilde{O}(n^{2})$, and our
result implies the best pseudopolynomial time algorithm has query
complexity $\tilde{O}(n)$ .

Our second result is a \emph{subquadratic approximate SFM }algorithm\emph{.
}More precisely, we give an algorithm which in $\tilde{O}(n^{5/3}\time/\eps^{2})$
time, returns an $\eps$-additive approximate solution.  To break
the quadratic barrier, that arise from the need to compute $\Omega(n)$
subgradient each of which individully we do not know how to compute
faster than $\Omega(n\cdot\time)$, we wed continuous optimization
techniques  with properties deduced from submodularity and simple
data structures. These allow us to compute and use gradient updates
in a much more economical fashion. We believe that that the ability
to obtain subquadratic approximate algorithms for approximate submodular
minimization is an interesting structural result that could have further
implications.\footnote{Note that simple graph optimization problems, such as directed minimum
$s$-$t$ cut, is not one of these (See Appendix~\ref{sec:minimum_cut}). } 

Finally, we show how to improve upon these results further if we know
that the optimal solution is sparse. This may be a regime of interest
for certain applications where the solution space is large (e.g. structured
predictions have exponentially large candidate sets \cite{prasad2014submodular}),
and as far as we are aware , no other algorithm gives sparsity-critical
results.

\subsection{Overview of Techniques}

\label{sec:technique_overview}

In a nutshell, all are our algorithms are projected, stochastic subgradient
descent algorithms on the Lovasz extension $\hat{f}$ of a submodular
function with economical subgradient updates. The latter crucially
uses submodularity and serves as the point of departure from previous
black-box continuous optimization based methods. In this section,
we give a brief overview of our techniques.

The Lovasz extension $\hat{f}$ of a submodular function is a non-smooth
convex function whose (approximate) minimizers leads to (approximate)
SFM. Subgradient descent algorithms maintain a current iterate $x^{(t)}$and
take a step in the negative direction of a subgradient $g(x^{(t)})$
at $x^{(t)}$to get the next iterate $x^{(t+1)}$. In general, the
subgradient of a Lovasz extension takes $O(n\time)$ to compute. As
stated above, the $\Omega(n)$ lower bound on the number of iterations
needed, implies that if we naively recompute the subgradients at every
iterations, we cannot beat the quadratic barrier. Our main technical
contribution is to exploit submodularity so that $g(x^{(t+1)})$ can
be computed in sublinear time given $x^{(t)}$ and $g(x^{(t)})$.

The first implication of submodularity  is the observation (also made
by \cite{JB11,HK12}) that $\ell_{1}$-norms of the subgradients are
bounded by $O(M)$ if the submodular function is in $[-M,M]$. When
the function is integer valued,  this implies that the subgradients
are sparse and have only $O(M)$ non-zero entries. Therefore, information
theoretically, we need only $O(M)$ bits to get $g(x^{(t+1)})$ from
$g(x^{(t)})$. However, we need an algorithm to find the positions
at which they differ. To do so, we use submodularity again. We observe
that given any point $x$ and non-negative, $k$-sparse vector $e,$
the difference vector $d:=g(x+e)-g(x)$ is non-positive at points
corresponding to support of $e$ and non-negative everywhere else.
Furthemore, on a ``contiguous set'' of coordinates, the sum of these
entries in $d$ can be computed in $O(\time)$ time. Armed with this,
we create a binary search tree (BST) type data structure to find the
$O(M)$ non-zero coordinates of $d$ in $O(M\cdot\time\log n)$ time
(as opposed to $O(n\cdot\time)$ time). This, along with standard
subgradient descent analysis yields our $O(nM^{3}\time\log n)$-algorithm.

When the submodular function is real valued between $[-1,1]$, although
the $\ell_{1}$-norm is small the subgradient can have full support.
Therefore, we cannot hope to evaluate the gradient in sublinear time.
We resort to stochastic subgradient descent where one moves along
a direction whose expected value is the negative subgradient and whose
variance is bounded. Ideally, we would have liked a fast one-shot
random estimation of \textbf{$g(x^{(t+1)})$; }unfortunately we do
not how to do it. What we can do is obtain fast estimates to the difference
vector $d$ mentioned above. As discussed above, the vector $d$ has
$O(k)$ ``islands'' of non-negative entries peppered with $O(k)$
non-positive entries. We maintain a data-structure which with $O(k\time\log n)$
preprocessing time can evaluate the sums of the entries in these islands
in $O(\time\log n)$ time. Given this, we can sample a coordinate
$j\in[n]$ with probability proportional to $|d_{j}|$ in a similar
time. Thus we get a random estimate of the vector $d$ whose variance
is bounded by a constant. 

To get the stochastic subgradient, however, we need to add these difference
vectors and this accumulates the variance. To keep the variance in
control, we run the final algorithm in batches. In each batch, as
we progress we take more samples of the $d$-vector to keep the variance
in check. This however increases the sparsity (the $k$ parameter),
and one needs to balance the effects of the two. At the end of each
batch, we spend $O(n\time)$ time computing the deterministic subgradient
and start the process over. Balancing the number of iterations and
length of batches gives us the $\tilde{O}(n^{5/3}\time\eps^{-2})$-time
algorithm for $\eps$-approximate SFM.

\subsection{Related Work}

Submodularity, and indeed SFM, has a rich body of work and we refer
the reader to surveys of Fujishige \cite{fujishigeB} and McCormick\cite{Mccormick06}
for a more detailed pre-2006 version. Here we mention a few subsequent
related works which were mostly inspired by application in machine
learning. 

Motivated by applications in computer vision \cite{BVZ01,BK04} which
require fast algorithms for SFM, researchers focused on minimization
of \emph{decomposable }submodular functions which are expressible
as sum of ``simple'' submodular functions. It is assumed that simple
submodular functions can be minimized fast (either in practice or
in theory). Such a study was initiated by Stobbe and Krause \cite{SK10}
and Kolmogorov \cite{Kolmogorov12} who gave faster (than general
SFM) algorithms for such functions. More recently, motivated by work
of Jegelka et al. \cite{JBS13}, algorithms with \emph{linear }convergence\emph{
}rates \cite{NJJ14,EN15} have been obtained. That is, they get $\eps$-approximate
algorithms with dependence on $\eps$ being $\log(1/\eps)$. \textbf{}.

We end our introductory discussion by mentioning the complexity of
\emph{constrained }SFM where one wishes to minimize over sets satisfying
some constraints. In general constrained SFM is much harder than unconstrained
SFM. For instance the minimum cut problem with cardinality constraints
becomes the balanced partitioning problem which is APX-hard. More
generally, Svitkina and Fleischer \cite{SF11} show that a large class
of constrained SFM problems cannot be approximated to better than
$\tilde{O}(\sqrt{n})$ factors without making exponentially many queries.
In contrast, Goemans and Soto \cite{GS13} prove that symmetric submodular
functions can be minimized over a large class of constraints. Inspired
by machine learning applications, Iyer et al. \cite{IJB13,IB13} give
algorithms for a large class of constrained SFM problems which have
good approximation guarantees if the \emph{curvature} of the functions
are small.

\paragraph{}

\section{Preliminaries}

Here we introduce notations and general concepts used throughout this
paper. 

\subsection{General Notation \label{sec:prelim:notation}}

We let $[n]\defeq\{1,...,n\}$ and $[0,1]^{n}\defeq\{x\in\R^{n}\,:\,x_{i}\in[0,1]\,\,\,\forall i\in[n]\}$.
Given a permutation $P=(P_{1},...,P_{n})$ of $[n]$, let $P[j]\defeq\{P_{1},P_{2},...,P_{j}\}$
be the set containing the first $j$ elements of $P$. Any point $x\in\R^{n}$
defines the permutation \emph{$P_{x}$ consistent with $x$ }where
$x_{P_{1}}\geq x_{P_{2}}\geq...\geq x_{P_{n}}$with ties broken lexicographically.
We denote by $\mathbf{1}_{i}\in\R^{n}$ the indicator vector for coordinate
$i$, i.e. $\mathbf{\mathbf{1}}_{i}$ has a $1$ in coordinate $i$
and a $0$ in all other coordinates. We call a vector $s$-\textit{sparse}
if it has at most $s$ non-zero entries.

\subsection{Submodular Functions \label{sec:prelim:submodular_functions}}

Throughout this paper $f\,:\,2^{U}\rightarrow\R$ denotes a submodular
function on a ground set $U$. For notational convenience we assume
without loss of generality that $U=[n]$ for some positive integer
$n$ and that $f(\emptyset)=0$ (as this can be enforced by subtracting
$f(\emptyset)$ from for the value of all sets while preserving submodularity).
Recall that $f$ is submodular if and only if it obeys the property
of diminishing marginal returns: for all $S\subseteq T\subseteq[n]$
and $i\notin T$ we have 
\[
f(S\cup\{i\})-f(S)\geq f(T\cup\{i\})-f(T)\,.
\]
We let $\OPT\defeq\min_{S\subseteq[n]}f(S)$ be the minimum value
of $f$. We denote by $\runtime$ the time it takes to evaluate $f$
on a set $S$. More precisely, we assume given a linked list storing
a permutation $P$ of $[n]$, and a position $k$, we can evaluate
$f(P[k])$ in $\time$ time. 

\subsection{The Lovasz Extension \label{sec:prelim:lovasz_extension}}

Our results make extensive use of the Lovasz extension, a convex,
continuous extension of a submodular function to the interior of the
$n$-dimensional hypercube, i.e. $[0,1]^{n}$.
\begin{defn}[Lovasz Extension]
 Given a submodular function $f$, the Lovasz extension of $f$,
denoted as $\hat{f}\,:\,[0,1]^{n}\rightarrow\R$, is defined for all
$x\in[0,1]^{n}$ by $\hat{f}(x)=\sum_{j\in[n]}(f([P[j])-f(P[j-1]))x_{i_{j}}$
where $P=P_{x}=(P_{1},...,P_{n})$ is the permutation consistent with
$x$.
\end{defn}
Note that since $f(\emptyset)=0$ this definition is equivalent to
\begin{equation}
\hat{f}(x)=f(P[n])x_{P_{n}}+\sum_{j\in[n-1]}f([P[j])(x_{P_{j}}-x_{P_{j+1}})\,.\label{eq:lovasz_non_negative}
\end{equation}
{} We make use of the following well known facts regarding submodular
functions (see e.g. \cite{Lovasz83,fujishigeB}).
\begin{thm}[Lovasz Extension Properties]
\label{thm:lovasz_extension_properties}  The following are true
for all $x\in[0,1]^{n}$:%
\end{thm}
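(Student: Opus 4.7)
The plan is to establish the standard Lovasz extension properties one expects here --- namely $\hat f(\mathbf 1_S) = f(S)$, convexity of $\hat f$, the discrete/continuous min-equality $\min_{S\subseteq[n]} f(S) = \min_{x\in[0,1]^n}\hat f(x)$, and an explicit subgradient characterization --- by leaning on the telescoped representation in equation \eqref{eq:lovasz_non_negative} and invoking submodularity at a single crucial step (greedy optimality over the base polytope).

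First I would handle the ``easy'' properties directly from \eqref{eq:lovasz_non_negative}. Plugging $x = \mathbf 1_S$ and choosing any permutation $P$ consistent with $x$ so that $P[|S|] = S$, every coefficient $x_{P_j} - x_{P_{j+1}}$ vanishes except the one at $j = |S|$ (which equals $1$), and the boundary term contributes $f([n])$ only in the edge case $S = [n]$; either way this collapses to $\hat f(\mathbf 1_S) = f(S)$. For the min-equality, I introduce $\alpha_j := x_{P_j} - x_{P_{j+1}}$ for $j < n$ and $\alpha_n := x_{P_n}$, and note that all $\alpha_j \geq 0$ with $\sum_j \alpha_j = x_{P_1} \leq 1$. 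Then \eqref{eq:lovasz_non_negative} expresses $\hat f(x)$ as a nonnegative combination of the values $f(P[j])$ with total mass at most $1$; using $f(\emptyset)=0$ this gives $\hat f(x) \geq \min_S f(S)$ for every $x \in [0,1]^n$, and equality is witnessed by $x = \mathbf 1_{S^*}$ for any minimizer $S^*$.

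For convexity and the subgradient characterization I would invoke Edmonds' theorem, namely the identity $\hat f(x) = \max_{g \in B_f} \langle g, x\rangle$, where $B_f$ is the base polytope $\{g \in \R^n : \sum_{i\in S} g_i \leq f(S)\ \forall S, \ \sum_i g_i = f([n])\}$. Given this identity, convexity is immediate because $\hat f$ is a pointwise maximum of linear functions, and the subgradient characterization follows because the greedy vector $g$ defined by $g_{P_j} := f(P[j]) - f(P[j-1])$ (with $P$ consistent with $x$) attains the maximum at $x$ --- this last fact is a one-line computation from \eqref{eq:lovasz_non_negative} showing $\hat f(x) = \langle g, x\rangle$ --- so for any $y \in [0,1]^n$ we get $\hat f(y) \geq \langle g, y\rangle = \hat f(x) + \langle g, y - x\rangle$, establishing $g$ as a subgradient.

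The main obstacle is the proof of Edmonds' theorem itself, in particular verifying that the greedy vector $g$ lies in $B_f$; this is the only place submodularity is really used. The standard argument fixes an arbitrary $S$, lists its elements along $P$ as $S = \{P_{j_1},\dots,P_{j_k}\}$ with $j_1 < \dots < j_k$, and applies diminishing returns at each position to get $g_{P_{j_\ell}} = f(P[j_\ell]) - f(P[j_\ell - 1]) \leq f(\{P_{j_1},\dots,P_{j_\ell}\}) - f(\{P_{j_1},\dots,P_{j_{\ell-1}}\})$; summing over $\ell$ telescopes to $f(S)$, proving $g \in B_f$. Optimality of $g$ for $\langle\cdot,x\rangle$ over $B_f$ then follows from a standard exchange argument on pairs of coordinates out of order with respect to $P$, again leveraging the inequalities defining $B_f$. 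With Edmonds' theorem in hand, all four properties fall out of the calculations sketched above.
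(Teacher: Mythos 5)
Your proposal is correct, but note that the paper never proves Theorem~\ref{thm:lovasz_extension_properties} at all: it is stated as a collection of well-known facts with a citation to Lov\'asz and Fujishige, so there is no in-paper argument to compare against. What you supply is the classical route. Consistency and the identity $\min_{x\in[0,1]^n}\hat f(x)=\min_{S}f(S)$ follow, as you say, directly from the telescoped form \eqref{eq:lovasz_non_negative}: the coefficients $\alpha_j$ are nonnegative with total mass $x_{P_1}\le 1$, and since $f(\emptyset)=0$ forces $\min_S f(S)\le 0$, the mass deficit does not spoil the lower bound $\hat f(x)\ge\min_S f(S)$, while equality is witnessed at an indicator vector. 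Convexity and the subgradient claim you reduce to Edmonds' identity $\hat f(x)=\max_{g\in B_f}\langle g,x\rangle$, with the greedy vector $g(x)$ both lying in $B_f$ (the telescoping diminishing-returns argument you give is exactly right and is the one place submodularity enters) and attaining the maximum at $x$; then $\hat f(y)\ge\langle g(x),y\rangle=\hat f(x)+\langle g(x),y-x\rangle$ gives the subgradient property. The only step you leave at the level of a sketch is greedy optimality over $B_f$ for nonnegative $x$; this can be closed either by your pairwise exchange argument or, slightly more cleanly, by Abel summation: for any $g'\in B_f$, $\langle g',x\rangle=\sum_{j<n}(x_{P_j}-x_{P_{j+1}})\sum_{i\le j}g'_{P_i}+x_{P_n}\sum_i g'_i\le\sum_{j<n}(x_{P_j}-x_{P_{j+1}})f(P[j])+x_{P_n}f([n])=\hat f(x)$, using that the coefficients are nonnegative on $[0,1]^n$. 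With that detail filled in, your proof is complete and is essentially Lov\'asz's original argument that the paper implicitly relies on.
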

\begin{itemize}
\item \setlength{\itemsep}{-1mm}\textbf{Convexity}: The Lovasz extension
is convex.
\item \textbf{Consistency}: For $x\in\{0,1\}^{n}$ we have $\hat{f}(x)=f(S(x))$
where $S(x)=\{i\in S\,:\,x_{i}=1\}$.
\item \textbf{Minimizers}: $\min_{x\in[0,1]^{n}}\hat{f}(x)=\min_{S\subseteq[n]}f(S)$.
\item \textbf{Subgradients}: The vector $g(x)\in\R^{n}$ defined by $g(x)_{P_{k}}\defeq f(P[k])-f(P[k-1])$
is a subgradient of $\hat{f}$ at $x$, where $P=P_{x}$ is the permutation
consistent with $x$. Let us call this the Lovasz subgradient. %

\end{itemize}
We conclude with a few straightforward computational observations
regarding the Lovasz extension and its subgradients. First note that
for $x\in[0,1]^{n}$ we can evaluate $\hat{f}(x)$ or compute $g(x)$
in time $O(n\time+n\log n)$ simply by sorting the coordinates of
$f$ and evaluating $f$ at the $n$ desired sets. Also, note that
by (\ref{eq:lovasz_non_negative}) the Lovasz extension evaluated
at $x\in[0,1]^{n}$ is a non-negative combination of the value of
$f$ at $n$ sets. Therefore computing the smallest of these sets
gives a set $S\subseteq[n]$ such that $f(S)\leq\hat{f}(x)$ and we
can clearly compute this, again in $O(n\time+n\log n)$ time. Therefore
for any algorithm which approximately minimizes the Lovasz extension
with some (additive) error $\eps$, we can always find a set $S$
achieving the same error on $f$ by just paying an additive $O(n\time+n\log n)$
in the running time.

\subsection{Subgradient Descent \label{sec:prelim:subgradient_descent}}

Our algorithmic results make extensive use of subgradient descent
(or mirror descent) and their stochastic analogs. Recall that for
a convex function $h\,:\,\chi\rightarrow\R$, where $\chi\subseteq\R^{n}$
is a compact convex set, a vector $g\in\R^{n}$ is a \emph{subgradient}
of $h$ at $x\in\chi$ if for all $y\in\chi$ we have 
\[
h(y)\geq h(x)+g^{\top}(y-x)\,.
\]
For such an $h$ we let $\partial h(x)$ denote the set of subgradients
of $h$ at $x$. An algorithm that on input $x$ outputs $\tilde{g}(x)\in\partial h(x)$
is a \emph{subgradient oracle} for $h$. Similarly, an algorithm that
on input $x$ outputs a \emph{random} $\tilde{g}(x)$ such that $\E\tilde{g}(x)\in\partial h(x)$
is a \emph{stochastic subgradient oracle} for $h$. 

One of our main algorithmic tools is the well known fact that given
a (stochastic) subgradient oracle we can minimize a convex function
$h$. Such algorithms are called\emph{ (stochastic) subgradient descent}
algorithms and fall into a more general framework of algorithms known
as \emph{mirror descent}. These algorithms are very well studied and
there is a rich literature on the topic. Below we provide one specific
form of these algorithms adapted from \cite{Bubeck15} that suffices
for our purposes.
\begin{thm}[Projected (Stochastic) Subgradient Descent\footnote{This is Theorem~6.1 from \cite{Bubeck15} restated where we used
the ``ball setup'' with $\Phi(x)=\frac{1}{2}\norm x_{2}^{2}$ so
that $\dist=\R^{n}$ and $D_{\Phi}(x,y)=\frac{1}{2}\norm{x-y}_{2}^{2}$.
We also used that $\argmin_{x\in\chi}\eta g^{\top}x+\frac{1}{2}\norm{x-x_{t}}_{2}^{2}=\argmin_{x\chi}\norm{x-(x_{t}-\eta g)}_{2}^{2}$.}]
\label{thm:subgradient_descent} Let $\chi\subseteq\R^{n}$ denote
a compact convex set, $h\,:\,\chi\rightarrow\R$ be a convex function,
$\tilde{g}$ be a (stochastic) subgradient oracle for which $\E\norm{\tilde{g}(x)}_{2}^{2}\leq B^{2}$
for all $x\in\chi$, and $R^{2}\defeq\sup_{x\in\chi}\frac{1}{2}\norm x_{2}^{2}$
. Now consider the iterative algorithm starting with 
\[
x^{(1)}:=\argmin_{x\in\chi}\norm x_{2}^{2}
\]
and for all $s$ we compute 
\[
x^{(s+1)}:=\argmin_{x\in\chi}\norm{x-(x^{(s)}-\eta\tilde{g}(x^{(s)}))}_{2}^{2}
\]
Then for $\eta=\frac{R}{B}\sqrt{\frac{2}{t}}$ we have
\[
\E h\left(\frac{1}{t}\sum_{i\in[t]}x^{(s)}\right)-\min_{x\in\chi}h(x)\leq RB\sqrt{\frac{2}{t}}\,.
\]
We refer to this algorithm as projected stochastic subgradient descent
when $\tilde{g}$ is stochastic and as projected subgradient descent
when $\tilde{g}$ is deterministic, though we often omit the term
\emph{projected }for brevity. Note that when $\tilde{g}$ is deterministic
the results are achieved exactly rather than in expectation. \end{thm}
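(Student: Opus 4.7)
The plan is to give the standard mirror-descent style analysis specialized to the Euclidean setup $\Phi(x) = \tfrac{1}{2}\|x\|_2^2$. By convexity of $h$ and Jensen's inequality,
\[
h\!\left(\tfrac{1}{t}\sum_{s=1}^{t} x^{(s)}\right) - \min_{x\in\chi} h(x) \;\leq\; \tfrac{1}{t}\sum_{s=1}^{t} \bigl(h(x^{(s)}) - h(x^{*})\bigr),
\]
where $x^{*}$ is any minimizer of $h$ in $\chi$. So it suffices to bound $\sum_{s=1}^{t} \E[h(x^{(s)}) - h(x^{*})]$ by $RB\sqrt{2t}$. In the stochastic case, condition on the history up to step $s$: then $\E[\tilde{g}(x^{(s)}) \mid x^{(s)}] \in \partial h(x^{(s)})$ by hypothesis, so the subgradient inequality yields $h(x^{(s)}) - h(x^{*}) \leq \E[\tilde{g}(x^{(s)}) \mid x^{(s)}]^\top (x^{(s)} - x^{*})$, and taking a further expectation gives $\E[h(x^{(s)}) - h(x^{*})] \leq \E[\tilde{g}(x^{(s)})^\top (x^{(s)} - x^{*})]$. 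The deterministic case is the same statement without expectations.

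Next I would carry out the one-step ``three-point'' calculation. Let $y^{(s+1)} := x^{(s)} - \eta \tilde{g}(x^{(s)})$. Nonexpansiveness of Euclidean projection onto the convex set $\chi$ gives $\|x^{(s+1)} - x^{*}\|_2^2 \leq \|y^{(s+1)} - x^{*}\|_2^2$, and expanding the right-hand side yields
\[
\|x^{(s+1)} - x^{*}\|_2^2 \;\leq\; \|x^{(s)} - x^{*}\|_2^2 \;-\; 2\eta\, \tilde{g}(x^{(s)})^\top (x^{(s)} - x^{*}) \;+\; \eta^{2}\|\tilde{g}(x^{(s)})\|_2^2.
\]
Rearranging gives $\tilde{g}(x^{(s)})^\top (x^{(s)} - x^{*}) \leq \tfrac{1}{2\eta}\bigl(\|x^{(s)} - x^{*}\|_2^2 - \|x^{(s+1)} - x^{*}\|_2^2\bigr) + \tfrac{\eta}{2}\|\tilde{g}(x^{(s)})\|_2^2$. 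Summing over $s = 1, \ldots, t$ telescopes the first term, so the total is at most $\tfrac{1}{2\eta}\|x^{(1)} - x^{*}\|_2^2 + \tfrac{\eta}{2}\sum_{s=1}^{t}\|\tilde{g}(x^{(s)})\|_2^2$.

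To finish, I would control the two terms. For the second, taking expectations and applying the hypothesis $\E\|\tilde{g}(x^{(s)})\|_2^2 \leq B^{2}$ bounds it by $\tfrac{\eta t B^{2}}{2}$. For the first, I use that $x^{(1)} = \argmin_{x\in\chi}\tfrac{1}{2}\|x\|_2^2$ satisfies the first-order optimality condition $(x^{(1)})^\top (x^{*} - x^{(1)}) \geq 0$, so
\[
\tfrac{1}{2}\|x^{(1)} - x^{*}\|_2^2 \;=\; \tfrac{1}{2}\|x^{*}\|_2^2 - \tfrac{1}{2}\|x^{(1)}\|_2^2 - (x^{(1)})^\top(x^{*} - x^{(1)}) \;\leq\; \tfrac{1}{2}\|x^{*}\|_2^2 \;\leq\; R^{2}.
\]
Hence $\E\sum_{s} \tilde{g}(x^{(s)})^\top (x^{(s)} - x^{*}) \leq \tfrac{R^{2}}{\eta} + \tfrac{\eta t B^{2}}{2}$. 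Plugging in $\eta = \tfrac{R}{B}\sqrt{2/t}$ makes both terms equal to $RB\sqrt{t/2}$, totaling $RB\sqrt{2t}$; dividing by $t$ yields the claimed bound $RB\sqrt{2/t}$.

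The only nonroutine points are the use of the tower property to pass from a stochastic subgradient to an actual subgradient in expectation, and the prox-center argument that $\|x^{(1)} - x^{*}\|_2^2 \leq 2R^{2}$ (rather than the looser bound $4R^{2}$ one would get from the triangle inequality $\|x^{(1)}\|_2 + \|x^{*}\|_2 \leq 2R\sqrt{2}$). Neither is really a hard obstacle, as the result is quoted directly from \cite{Bubeck15}; the main work is merely to specialize the general mirror-descent proof to $\Phi(x) = \tfrac{1}{2}\|x\|_2^2$ and record the constants carefully.
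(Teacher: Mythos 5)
Your proof is correct, and it is essentially the argument the paper relies on: the paper does not prove this theorem itself but imports it as Theorem 6.1 of \cite{Bubeck15} in the ``ball setup,'' and your derivation (Jensen plus the tower property, nonexpansiveness of Euclidean projection, the telescoping three-point bound, the prox-center estimate $\frac{1}{2}\norm{x^{(1)}-x^{*}}_{2}^{2}\leq\frac{1}{2}\norm{x^{*}}_{2}^{2}\leq R^{2}$, and the choice $\eta=\frac{R}{B}\sqrt{2/t}$) is exactly that mirror-descent proof specialized to $\Phi(x)=\frac{1}{2}\norm x_{2}^{2}$, with the constants matching the stated bound $RB\sqrt{2/t}$.
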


\section{Faster Submodular Function Minimization\label{sec:fast_sfm}}

In this section we provide faster algorithms for SFM. In particular
we provide the first nearly linear time pseudopolynomial algorithm
for SFM and the first subquadratic additive approximation algorithm
for SFM. Furthermore, we show how to obtain even faster running times
when the SFM instance is known to have a sparse solution. 

All our algorithms follow the same broad algorithmic framework of
using subgradient descent with a specialized subgradient oracle. Where
they differ is in how the structure of the submodular functions is
exploited in implementing these oracles. The remainder of this section
is structured as follows: in Section\textbf{~\ref{sec:fast_sfm:framework}}
we provide the algorithmic framework we use for SFM, in Section\textbf{~\ref{sec:fast_sfm:subgrad}},
we prove structural properties of submodular functions that we use
to compute subgradients, in Section\textbf{~\ref{sec:fast_sfm:pseudopoly}},
we describe our nearly linear time pseodopolynomial algorithms, in
Section\textbf{~\ref{sec:fast_sfm:subquad}}, we describe our subquadratic
additive approximation algorithm, and in Section\textbf{~\ref{sec:fast_sfm:sparse}},
we show how to improve these results when SFM has a sparse solution. 

We make minimal effort to control logarithmic factors in through this
section and note that some of the  factors come from sorting and therefore
maybe can be removed depending on the desired computational model.

\subsection{Algorithmic Framework \label{sec:fast_sfm:framework}}

All our algorithms for SFM follow the same broad algorithmic framework.
We consider the Lovasz extension $\hat{f}:[0,1]^{n}\rightarrow\R$,
and perform projected (stochastic) subgradient descent on $\hat{f}$
over the convex domain $\chi=[0,1]^{n}$. While the subgradient oracle
construction differs between algorithms (and additional care is used
to improve when the solution is sparse, i.e. Section~\ref{sec:fast_sfm:sparse})
the rest of algorithms for Section~\ref{sec:fast_sfm:pseudopoly}
and Section~\ref{sec:fast_sfm:subquad} are identical.

In the following, Lemma~\ref{lem:framework}, we encapsulate this
framework, bounding the performance of projected (stochastic) subgradient
descent to the Lovasz extension, i.e. applying Theorem~\ref{thm:subgradient_descent}
to $\hat{f}$ over $\chi=[0,1]^{n}$. Formally, we abstract away the
properties of a subgradient oracle data structure that we need to
achieve a fast algorithm. With this lemma in place the remainder of
the work in Section~\ref{sec:fast_sfm:subgrad}, Section~\ref{sec:fast_sfm:pseudopoly},
and Section~\ref{sec:fast_sfm:subquad} is to show how to efficiently
implement the subgradient oracle in the particular setting.

\global\long\def\Tg{\mathrm{T_{g}}}

\begin{lem}
\label{lem:framework} Suppose that there exists a procedure which
maintains $(x^{(i)},\tilde{g}^{(i)})$ satisfying the invariants:
(a) $\tilde{g}^{(i)}$ is $k$-sparse, (b) $\E[\tilde{g}^{(i)}]=g(x^{(i)})$
is the Lovasz subgradient at $x^{(i)},$(c) $\E\norm{\tilde{g}^{(i)}}_{2}^{2}\leq B^{2}$.
Furthermore, suppose given any $e^{(i)}$ which is $k$-sparse, the
procedure can update to $(x^{(i+1)}=x^{(i)}+e^{(i)},\tilde{g}^{(i+1)})$
in time $\Tg$. Then, for any $\eps>0,$ we can compute a set $S$
with $\E[f(S)]\le\OPT+\eps$ in time $O(nB^{2}\eps^{-2}\Tg+n\time+n\log n)$.
If invariants (b) and (c) hold without expectation, then so does our
algorithm.

\end{lem}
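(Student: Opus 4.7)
The plan is to apply Theorem~\ref{thm:subgradient_descent} (projected stochastic subgradient descent) to $h = \hat{f}$ on the compact convex set $\chi = [0,1]^n$, using the hypothesized data structure as the stochastic subgradient oracle, and then round the resulting fractional iterate to a set using the observation at the end of Section~\ref{sec:prelim:lovasz_extension}. The two things to check are (i) that the iterates produced by the algorithm are compatible with the update interface the data structure exposes, i.e.\ consecutive iterates differ in a $k$-sparse vector, and (ii) that the iteration count, subgradient bound, and domain radius combine to give the claimed running time.

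First I would compute the two scalar constants needed to instantiate Theorem~\ref{thm:subgradient_descent}. Since $\chi = [0,1]^n$, we have $R^2 = \sup_{x \in \chi} \tfrac{1}{2}\|x\|_2^2 = n/2$, so $R = \sqrt{n/2}$. By invariant (c) the stochastic oracle $\tilde{g}^{(i)}$ satisfies $\E\|\tilde{g}^{(i)}\|_2^2 \leq B^2$, and by invariant (b) its expectation is a subgradient of $\hat{f}$ at $x^{(i)}$, so it is a valid stochastic subgradient oracle in the sense of Theorem~\ref{thm:subgradient_descent}. Setting $t = \lceil 2 R^2 B^2 / \eps^2 \rceil = O(n B^2/\eps^2)$ and $\eta = (R/B)\sqrt{2/t}$, the theorem gives
\[
\E\,\hat{f}\!\left(\tfrac{1}{t}\sum_{s \in [t]} x^{(s)}\right) - \min_{x \in [0,1]^n} \hat{f}(x) \;\leq\; R B \sqrt{2/t} \;\leq\; \eps.
\]

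Next I would verify the sparsity of the iterate updates. The algorithm's update rule on $\chi = [0,1]^n$ is coordinate-wise projection of $x^{(s)} - \eta \tilde{g}^{(s)}$ onto $[0,1]$, i.e.\ clipping. Since $\tilde{g}^{(s)}$ is $k$-sparse by invariant (a), the vector $x^{(s+1)} - x^{(s)}$ is supported on the at most $k$ coordinates where $\tilde{g}^{(s)}$ is nonzero, and is therefore itself $k$-sparse. Thus $e^{(s)} := x^{(s+1)} - x^{(s)}$ is a legal input to the update procedure, which by hypothesis produces the next pair $(x^{(s+1)}, \tilde{g}^{(s+1)})$ in time $T_g$. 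Summing over $t$ iterations gives a running time of $O(t \cdot T_g) = O(n B^2 \eps^{-2} T_g)$ for the descent phase.

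Finally I would round the averaged iterate to a set. Let $\bar{x} = \tfrac{1}{t}\sum_s x^{(s)} \in [0,1]^n$. By equation~(\ref{eq:lovasz_non_negative}), $\hat{f}(\bar{x})$ is a nonnegative combination of $f$ evaluated on the $n$ prefix sets $P[1],\dots,P[n]$ of the permutation consistent with $\bar{x}$; in particular, the minimum of these $n$ values of $f$ is at most $\hat{f}(\bar{x})$. Sorting $\bar{x}$ and evaluating $f$ on these $n$ prefixes takes $O(n\time + n \log n)$ time and yields a set $S \subseteq [n]$ with $f(S) \leq \hat{f}(\bar{x})$. Combined with the Minimizers property of Theorem~\ref{thm:lovasz_extension_properties}, namely $\min_{x \in [0,1]^n} \hat{f}(x) = \OPT$, we get $\E[f(S)] \leq \E[\hat{f}(\bar{x})] \leq \OPT + \eps$, completing the proof. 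The total running time is $O(n B^2 \eps^{-2} T_g + n\time + n\log n)$ as claimed; if invariants (b) and (c) hold without expectation the same argument applies to deterministic subgradient descent and the guarantee holds deterministically. The only subtle point, and perhaps the only one worth double-checking, is that clipping preserves $k$-sparsity of the increment, which is immediate because clipping acts coordinate-wise and leaves coordinates on which $\tilde{g}^{(s)}$ vanishes unchanged.
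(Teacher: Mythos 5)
Your proposal is correct and follows essentially the same route as the paper: instantiate Theorem~\ref{thm:subgradient_descent} for $\hat{f}$ over $[0,1]^{n}$ with $R^{2}=n/2$, observe that the coordinate-wise projection (clipping) keeps the iterate increments $k$-sparse so the update procedure applies, and round the averaged iterate to a set via the prefix-set argument in $O(n\time+n\log n)$ time. No gaps of substance; the paper's proof makes the same points, including the median/clipping formula for the projection.
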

\begin{proof}
We invoke Theorem~\ref{thm:subgradient_descent} on the convex function
$\hat{f}\,:\,[0,1]^{n}\rightarrow\R$ over the convex domain $\chi=[0,1]^{n}$
to obtain the iterates where we use the given subgradient oracle.
Clearly 
\[
x^{(1)}=\argmin_{x\in[0,1]^{n}}\frac{1}{2}\norm x_{2}^{2}=0\in\R^{n}
\]
and
\[
R^{2}=\sup_{x\in[0,1]^{n}}\frac{1}{2}\norm x_{2}^{2}=\frac{1}{2}\norm 1_{2}^{2}=\frac{n}{2}.
\]
Consequently, as long as we implement the projection step for $T=O(nB^{2}\eps^{-2})$
steps (each step requiring $\Tg$ time), then Theorem~\ref{thm:subgradient_descent}
yields
\[
\E\hat{f}\left(\frac{1}{T}\sum_{i\in[T]}x^{(i)}\right)-\min_{x\in\chi}\hat{f}(x)\leq RB\sqrt{\frac{2}{T}}\leq\sqrt{\frac{nB^{2}}{T}}\leq\eps\,.
\]

Furthermore, as we argued in Section~\ref{sec:prelim:lovasz_extension}
we can compute $S$ with $f(S)\leq\hat{f}(\frac{1}{T}\sum_{i\in[T]}x^{(i)})$
in the time it takes to compute $\frac{1}{T}\sum_{i\in[T]}x^{(i)}$
plus additional $O(n\runtime+n\log n)$ time. To prove the lemma all
that remains to to reason about the complexity of computing the projection,
i.e. $x^{(t+1)}$, given that all the subgradients we compute are
$s$-sparse. However, since $x^{(t+1)}=\argmin_{x\in[0,1]^{n}}\norm{x-(x^{(t)}-\eta\tilde{g}(x^{(t)})}_{2}^{2}$
decouples coordinate-wise -- note that $x^{(t+1)}=\mathrm{median}\{0\,,\,x^{(t)}-\eta\tilde{g}(x^{(t)})\,,\,1\}$,
we subtract $\eta\tilde{g}(x^{(t)})$ from $x^{(t)}$ and if any coordinate
is less than $0$ we set it to $0$ and if any coordinate is larger
than $1$ we set it to $1$. Thus the edit vector $e^{(i)}$ is of
sparsity $\leq k$. Combining these facts yields the described running
time.
\end{proof}

\subsection{Subgradients of the Lovasz Extension\label{sec:fast_sfm:subgrad}}

Here we provide structural results of submodular function that we
leverage to compute subgradients of submodular functions in $o(n)$
time on average. %
{} First, in Lemma~\ref{lem:bounded-subgradients} we state a result
due to Jegelka and Bilmes \cite{JB11}(also Hazan and Kale \cite{HK12})
which puts an upper bound on the $\ell_{1}$ norm of subgradients
of the Lovasz extension provided we have an upper bound on the maximum
absolute value of the function. We provide a short proof for completeness.
{} %
{} 
\begin{lem}[Subgradient Upper Bound]
\label{lem:bounded-subgradients}  If $|f(S)|\leq M$ for all $S\subseteq[n]$,
then $\norm{g(x)}_{1}\leq3M$ for all $x\in[0,1]^{n}$ and for all
subgradients $g$ of the Lovasz extension.\end{lem}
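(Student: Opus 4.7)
The plan is to reduce the $\ell_1$ bound to two well-known facts about submodular polyhedra: for every subgradient $g$ of $\hat{f}$ one has (i) $\mathbf{1}^\top g = f([n])$ and (ii) $\sum_{i\in S} g_i \leq f(S)$ for every $S\subseteq [n]$. Granting these, the conclusion is almost immediate. Let $S^+ := \{i : g_i > 0\}$ and write $g(T) := \sum_{i\in T} g_i$. Split the $\ell_1$ norm as $\|g\|_1 = g(S^+) - g([n]\setminus S^+)$, and use (i) to rewrite $g([n]\setminus S^+) = f([n]) - g(S^+)$. Then
\[
\|g\|_1 \;=\; 2\,g(S^+) - f([n]) \;\leq\; 2f(S^+) - f([n]) \;\leq\; 2M + M \;=\; 3M,
\]
where the first inequality uses (ii) applied to $S = S^+$ and the final step uses $|f(\cdot)|\leq M$.

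It remains to establish (i) and (ii). For the greedy subgradient of Theorem~\ref{thm:lovasz_extension_properties}, (i) is a direct telescoping: $\mathbf{1}^\top g = \sum_{k=1}^n \bigl(f(P[k]) - f(P[k-1])\bigr) = f([n])$. For (ii), I would enumerate $S = \{P_{k_1},\ldots,P_{k_m}\}$ with $k_1 < k_2 < \cdots < k_m$ and telescope $f(S) = \sum_{j=1}^m \bigl(f(\{P_{k_1},\ldots,P_{k_j}\}) - f(\{P_{k_1},\ldots,P_{k_{j-1}}\})\bigr)$. Since $\{P_{k_1},\ldots,P_{k_{j-1}}\}\subseteq P[k_j - 1]$ and $P_{k_j}\notin P[k_j - 1]$, diminishing marginal returns yields each summand $\geq f(P[k_j]) - f(P[k_j - 1]) = g_{P_{k_j}}$; summing over $j$ gives $f(S) \geq g(S)$. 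To extend (i) and (ii) to an arbitrary subgradient, one uses that $\partial \hat{f}(x)$ is the set of maximizers of $\langle x,\cdot\rangle$ over the base polytope $B_f = \{y : y([n]) = f([n]),\; y(S)\leq f(S)\; \forall S\}$, so every subgradient lies in $B_f$ and therefore satisfies (i) and (ii).

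The argument is not technically difficult; the only point of care is that the lemma covers \emph{all} subgradients of $\hat{f}$ rather than just the greedy one of Theorem~\ref{thm:lovasz_extension_properties}. If one prefers not to invoke the base-polytope description of the subdifferential, it suffices for the applications in Section~\ref{sec:fast_sfm} to prove the bound for the greedy subgradient, which is the only one that the algorithms actually construct; the argument above then specializes without modification.
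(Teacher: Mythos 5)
Your proposal is correct, but it takes a genuinely different route from the paper's. The paper argues directly on the greedy subgradient: after relabeling so that the consistent permutation is the identity, it bounds the positive coordinates by telescoping over the nested sets $R_i=\{r_1,\dots,r_i\}$ (giving $\sum_i |g_{r_i}|\le f(R_R)-f(\emptyset)\le M$) and, separately, bounds the negative coordinates by telescoping over the supersets $V\cup S_i\supseteq[s_i-1]$ with $V$ the complement of the negative coordinates (giving $\sum_i |g_{s_i}|\le f(V)-f([n])\le 2M$). You instead route everything through the two Edmonds-type facts $g([n])=f([n])$ and $g(S)\le f(S)$ for all $S$, i.e.\ membership of the greedy vector in the base polytope, and then finish with the one-line identity $\norm{g}_1 = 2g(S^+)-f([n])\le 2f(S^+)-f([n])\le 3M$. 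The use of diminishing returns in your proof of (ii) is essentially the same telescoping as the paper's positive-coordinate bound, so the two arguments are close cousins, but your organization is more modular: it isolates the polytope constraints once, and it yields the bound for \emph{every} point of $B_f$, not only greedy vertices, which is what lets you address the ``all subgradients'' phrasing at all. One caveat on that extension: the identification of $\partial\hat{f}(x)$ with the maximizers of $\langle x,y\rangle$ over $B_f$ is the standard fact for the Lovasz extension viewed as a positively homogeneous convex function on $\R^n$; with the paper's constrained definition of subgradient over the compact domain $[0,1]^n$, boundary points acquire normal-cone directions and the literal claim fails (at $x=0$, any vector coordinatewise below a subgradient is again a subgradient, with arbitrarily large $\ell_1$ norm). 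This is a defect of the lemma's phrasing rather than of your argument: the paper's own proof likewise only covers the Lovasz subgradient $g(x)$, which, as you note, is the only one the algorithms construct and use.
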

\begin{proof}
For notational simplicity suppose without loss of generality (by changing
the name of the coordinates) that $P(x)=(1,2,...,n)$, i.e. $x_{1}\geq x_{2}\geq...\geq x_{n}$.
Therefore, for any $i\in[n],$we have $g_{i}=f([i])-f([i-1])$. Let
$r_{1}\leq r_{2}\leq...,\leq r_{R}$ denote all the coordinates such
that $g_{r_{i}}>0$ and let $s_{1}\leq s_{2}\leq...\leq s_{S}$ denote
all the coordinates such that $g_{s_{i}}<0$. 

We begin by bounding the contribution of the positive coordinate,
the $g_{r_{i}}$, to the norm of the gradient, $\norm g_{1}$. For
all $k\in[R]$ let $R_{k}\defeq\left\{ r_{1},...,r_{k}\right\} $
with $R_{0}=\emptyset$. By assumption we know that that $f(R_{0})=\emptyset$.
Furthermore, by submodularity, i.e. diminishing marginal returns,
we know that for all $i\in[R]$
\[
f(R_{i})-f(R_{i-1})\geq f([r_{i}])-f([r_{i}-1])=:g_{r_{i}}=\left|g_{r_{i}}\right|
\]
Consequently $f(R_{R})-f(R_{0})=\sum_{i\in[R]}f(R_{i})-f(R_{i-1})\geq\sum_{i\in[R]}\left|g_{r_{i}}\right|$.
Since $f(R_{0})=0$ and $f(R_{R})\leq M$ by assumption we have that
$\sum_{i\in[k]}\left|g_{r_{i}}\right|\leq M\,.$

Next, we bound the contribution of the negative coordinatess, the
$g_{s_{i}}$, similarly. For all $k\in[S]$ let $S_{k}\defeq\left\{ s_{1},...,s_{k}\right\} $
with $S_{0}=\emptyset$. By assumption we know that that $f(S_{0})=\emptyset$.
Define $V:=[n]\setminus S$. Note that for all $i\in[S],$ the set
$V\cup S_{i-1}$ is a superset of $[s_{i}-1].$ Therefore, submodularity
gives us for all $i\in[S]$, 

\[
f(V\cup S_{i})-f(V\cup S_{i-1})\leq f([s_{i}])-f([s_{i}-1])=g_{s_{i}}=-\abs{g_{s_{i}}}
\]

Summing over all $i,$ we get $f([n])-f(V)\leq\sum_{i\in[S]}-\left|g_{s_{i}}\right|$.
Since $f([n])\geq-M$ and $f(V)\leq M$ we have that $\sum_{i\in[S]}\left|g_{s_{i}}\right|\leq2M\,.$
Combining these yields that $\norm g_{1}=\sum_{i\in[n]}\left|g_{i}\right|=\sum_{i\in[R]}\left|g_{r_{i}}\right|+\sum_{i\in[S]}\left|g_{s_{i}}\right|\leq3M\,.$
\end{proof}
Next, in Lemma~\ref{lem:subgrad_monotonicity} we provide a simple
but crucial monotonicity property of the subgradient of $\hat{f}$.
In particular we show that if we add (or remove) a positive vector
from $x\in[0,1]^{n}$ to obtain $y\in[0,1]^{n}$ then the gradients
of the \emph{untouched} coordinates all decrease (or increase).
\begin{lem}[Subgradient Monotonicity]
\label{lem:subgrad_monotonicity}  Let $x\in[0,1]^{n}$ and let $d\in\R_{\geq0}^{n}$
be such that $y=x+d$ (resp. $y=x-d$). Let $S$ denote the non-zero
coordinates of $d$. Then for all $i\notin S$ we have $g(x)_{i}\geq g(y)_{i}$
(resp. $g(x)_{i}\leq g(y)_{i}$).\end{lem}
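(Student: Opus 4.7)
The plan is to fix any coordinate $i \notin S$ and compare its position in the two permutations $P_x$ and $P_y$, then invoke diminishing marginal returns. I will prove the ``$y = x+d$'' case and deduce the ``$y = x-d$'' case by swapping the roles of $x$ and $y$.

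First, let $k$ denote the position of $i$ in $P_x$ and $k'$ its position in $P_y$, so that by definition $g(x)_i = f(P_x[k-1] \cup \{i\}) - f(P_x[k-1])$ and $g(y)_i = f(P_y[k'-1] \cup \{i\}) - f(P_y[k'-1])$. The heart of the proof will be the set-containment
\[
P_x[k-1] \;\subseteq\; P_y[k'-1],
\]
i.e.\ every element that outranked $i$ under $x$ still outranks $i$ under $y$. To verify this, take any $j \in P_x[k-1]$ and split into cases. If $j \in S$ then $y_j = x_j + d_j > x_j \geq x_i = y_i$, so $j$ strictly outranks $i$ in $P_y$. If $j \notin S$ then $y_j = x_j$ and $y_i = x_i$; if $x_j > x_i$ the same strict inequality transfers, while if $x_j = x_i$ then the lexicographic tie-breaking that placed $j$ before $i$ in $P_x$ (since $j \in P_x[k-1]$) gives the identical ordering in $P_y$. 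This handles all cases.

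With the containment established, apply the diminishing marginal returns form of submodularity to the nested sets $A := P_x[k-1] \subseteq B := P_y[k'-1]$ with the element $i \notin B$ (note $i \notin B$ since $i$ sits at position $k'$ in $P_y$):
\[
g(x)_i \;=\; f(A \cup \{i\}) - f(A) \;\geq\; f(B \cup \{i\}) - f(B) \;=\; g(y)_i.
\]
This proves the $y = x + d$ case. For $y = x - d$, write $x = y + d$ with $d \in \R_{\geq 0}^n$ and the same non-zero support $S$; the case just proved, applied with $y$ in the role of $x$ and $x$ in the role of $y$, yields $g(y)_i \geq g(x)_i$ for all $i \notin S$, which is the desired inequality.

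The main (mild) obstacle is being careful with ties: without the lexicographic tie-breaking convention, a coordinate $j \notin S$ sharing a value with $i$ could legitimately move from above to below $i$ when $d$ is added, and the set containment $P_x[k-1] \subseteq P_y[k'-1]$ could fail. The tie-breaking convention stated in Section~\ref{sec:prelim:notation} is precisely what makes the ordering stable across the update, and I would explicitly flag this in the proof.
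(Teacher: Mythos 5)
Your proof is correct and takes essentially the same approach as the paper: both arguments show that the prefix of elements ranked above $i$ under $x$ is contained in the prefix ranked above $i$ under $y$ (i.e.\ $P_x[k-1]\subseteq P_y[k'-1]$) and then apply diminishing marginal returns to these nested sets. Your explicit case analysis on ties and the role-swap reduction for the $y=x-d$ case are just more detailed renderings of steps the paper treats as immediate or ``analogous.''
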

\begin{proof}
We only prove the case of $y=x+d$ as the proof of the $y=x-d$ case
is analagous. Let $P^{(x)}$and $P^{(y)}$ be the permutations consistent
with $x$ and $y$. Note that $P^{(y)}$ can be obtained from $P^{(x)}$
by moving a subset of elements in $S$ to the left, and the relative
ordering of elements \emph{not in }$S$ remains the same. Therefore,
for any $i\notin S,$ if $r$ is its rank in $P^{(x)},$ that is,
$P_{r}^{(x)}=i,$ and $r'$is its rank in $P^{(y)}$ , then we must
have $P^{(y)}[r']\supseteq P^{(x)}[r].$ By submodularity, $g(y)_{i}=f(P^{(y)}[r'])-f(P^{(y)}[r'-1])\leq f(P^{(x)}[r])-f(P^{(x)}[r-1]))=g(x)_{i}.$
\end{proof}
Lastly, we provide Lemma~\ref{lem:subgrad_interval} giving a simple
formula for the sum of multiple coordinates in the subgradient. 

\begin{lem}[Subgradient Intervals]
\label{lem:subgrad_interval}  Let $x\in[0,1]^{n}$ and let $P$
be the permutation consistent with $x$. For any positive integers
$a\leq b$, we have $\sum_{i=a}^{b}g(x)_{P_{i}}=f(P[b])-f(P[a-1])$.\end{lem}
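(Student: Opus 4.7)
The plan is to observe that this statement is essentially immediate from the definition of the Lovasz subgradient given in Theorem~\ref{thm:lovasz_extension_properties}, and reduces to a one-line telescoping computation. By that theorem, for the permutation $P = P_x$ consistent with $x$, the Lovasz subgradient satisfies
\[
g(x)_{P_k} = f(P[k]) - f(P[k-1])
\]
for every $k \in [n]$, where we use the convention $P[0] = \emptyset$ and $f(\emptyset) = 0$.

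First I would substitute this defining identity into the sum in question, yielding
\[
\sum_{i=a}^{b} g(x)_{P_i} = \sum_{i=a}^{b} \bigl(f(P[i]) - f(P[i-1])\bigr).
\]
Then I would note that this is a telescoping sum: every intermediate term $f(P[i])$ for $a \leq i \leq b-1$ appears once with a positive sign (from index $i$) and once with a negative sign (from index $i+1$), so they cancel in pairs. What survives is only the first negative term and the last positive term, giving
\[
\sum_{i=a}^{b} g(x)_{P_i} = f(P[b]) - f(P[a-1]),
\]
which is exactly the claimed identity.

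There is essentially no obstacle here; the only thing to be careful about is the indexing convention for $P[0]$ (which is the empty set, consistent with the formula $g(x)_{P_1} = f(P[1]) - f(\emptyset) = f(P[1])$), and the fact that the permutation $P$ is well-defined since we broke ties lexicographically when defining $P_x$ in Section~\ref{sec:prelim:notation}. No submodularity is needed; this lemma is purely a telescoping identity that rests entirely on the explicit form of the Lovasz subgradient.
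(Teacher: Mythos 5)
Your proof is correct and follows essentially the same route as the paper: substitute the defining formula $g(x)_{P_i} = f(P[i]) - f(P[i-1])$ and observe that the sum telescopes to $f(P[b]) - f(P[a-1])$. The extra remarks on the $P[0]=\emptyset$ convention and tie-breaking are fine but not needed beyond what the paper already states.
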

\begin{proof}
This follows immediately from the definition of $g(x)$: $\sum_{i=a}^{b}g(x)_{P_{i}}=\sum_{i=a}^{b}(f(P[i])-f(P[i-1]))=f(P[b])+\sum_{i=a}^{b-1}f(P[i])-\sum_{i=a}^{b-1}f(P[i])-f(P[a-1])\,.$
\end{proof}

\subsection{Nearly Linear in $n$, Pseudopolynomial Time Algorithm \label{sec:fast_sfm:pseudopoly}}

Here we provide the first nearly linear time pseudopolynomial algorithm
for submodular function minimization. Throughout this section we assume
that our submodular function $f$ is integer valued with $|f(S)|\leq M$
for all $S\subseteq[n]$. Our goal is to deterministically produce
an exact minimizer of $f$. The primary result of this section is
showing the following, that we can achieve this in $\otilde(nM^{3}\runtime)$
time:
\begin{thm}
\label{thm:runtime_pseudopoly} Given an integer valued submodular
function $f$ with $|f(S)|\leq M$ for all $S\subseteq[n]$ in time
$O(nM^{3}\time\log n)$ we can compute a minimizer of $f$.
\end{thm}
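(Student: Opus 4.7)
The plan is to apply Lemma~\ref{lem:framework} with a bespoke deterministic subgradient oracle that exploits the sparsity of subgradients when $f$ is integer-valued. By Lemma~\ref{lem:bounded-subgradients}, every Lovasz subgradient $g(x)$ satisfies $\norm{g(x)}_{1}\le 3M$, and since $f$ is integer-valued so is $g$, giving at most $3M$ nonzero coordinates and $\norm{g(x)}_{2}^{2}\le\norm{g(x)}_{1}^{2}\le 9M^{2}$. So we may take sparsity parameter $k=O(M)$ and $B=O(M)$ in the framework. Moreover, since $f(S)$ is an integer for every $S$, any additive error strictly less than $1$ yields an exact minimizer; taking $\eps=1/3$ in Lemma~\ref{lem:framework} produces an exact algorithm in $O(nM^{2}\cdot\Tg+n\time+n\log n)$ total time, so it suffices to implement the oracle with update cost $\Tg=O(M\time\log n)$.

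To do so, I would maintain a balanced BST keyed by the current permutation $P^{(i)}=P_{x^{(i)}}$ in which each internal node stores the sum of the sparse gradient $g^{(i)}$ over its subtree. Given a $k$-sparse edit $e^{(i)}$, the new permutation $P^{(i+1)}$ differs from $P^{(i)}$ only in the positions of the $O(M)$ touched coordinates in $\supp(e^{(i)})$; rearranging the BST to reflect $P^{(i+1)}$ costs $O(M\log n)$ rotations, and the new gradient values at touched coordinates are recomputed in $O(M\time)$ time from $g_{P_{k}}=f(P[k])-f(P[k-1])$ as in Theorem~\ref{thm:lovasz_extension_properties}.

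The main obstacle, and the essential place where submodularity enters, is locating the untouched coordinates at which the gradient has shifted because of permutation changes alone. I would split $e^{(i)}=e^{+}-e^{-}$ into its nonnegative and nonpositive parts (on disjoint supports) and process the two monotone updates $x^{(i)}\to x^{(i)}+e^{+}\to x^{(i+1)}$ separately, maintaining the intermediate gradient between them. For each monotone pass, Lemma~\ref{lem:subgrad_monotonicity} guarantees that the change vector $d=g_{\text{new}}-g_{\text{old}}$ has a fixed sign on every untouched coordinate. Walking the BST over the new permutation, at each visited node $[a,b]$ I use Lemma~\ref{lem:subgrad_interval} to compute $\sum_{k=a}^{b}g_{\text{new},P_{k}}=f(P[b])-f(P[a-1])$ in $O(\time)$ and compare it against the stored sum of $g_{\text{old}}$ over the same coordinates: if the sums agree then $d$ vanishes on the whole subtree (this is where the fixed-sign guarantee is crucial) and the subtree is pruned, otherwise I recurse into both children. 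Because both $g_{\text{old}}$ and $g_{\text{new}}$ are $O(M)$-sparse, the vector $d$ has only $O(M)$ nonzero entries, so the descent visits $O(M\log n)$ nodes, giving $\Tg=O(M\time\log n)$.

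Combining the two passes per iteration yields total runtime $O(nM^{2}\cdot M\time\log n)=O(nM^{3}\time\log n)$, matching the theorem. The delicate part I expect to require the most care is verifying that the stored subtree sums remain synchronized across both the permutation edits and the updates of the sparse nonzero entries of $g$ between iterations, and that the fixed-sign pruning is applied only to subtrees that lie entirely within a single monotone pass so that Lemma~\ref{lem:subgrad_monotonicity} actually controls every coordinate in the subtree.
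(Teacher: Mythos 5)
Your proposal is correct and follows essentially the same route as the paper: invoke the framework lemma with $k=O(M)$, $B=O(M)$ (via the $\ell_{1}$ bound plus integrality), take additive error below $1$ for exactness, split each edit into its nonnegative and nonpositive parts, and recover the $O(M)$ changed gradient coordinates by a monotonicity-plus-interval-sum pruned descent in a BST with subtree gradient sums, giving $\Tg=O(M\time\log n)$ and total time $O(nM^{3}\time\log n)$. The only (immaterial) difference is that the paper keys the BST by the values $x_{j}$ with consistent tie-breaking rather than by the permutation directly.
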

We prove the theorem by describing $(x^{(i)},\tilde{g}(x^{(i)})$)
in Lemma~\ref{lem:framework}. In fact, in this case, $\tilde{g}$
will \emph{deterministically }be the subgradient of the Lovasz extension.
In Lemma~\ref{lem:sparse_subgradients}, we prove that the Lovasz
subgradient is $O(M)$-sparse and so $\|g\|_{2}^{2}\leq O(M^{2})$.
Thus, Conditions (a), (b), and (c) are satisfied with $B^{2}=O(M^{2})$.
The main contribution of this section is Lemma \ref{lem:subgradient-update},
where we show that $\Tg=O(M\log n\cdot\time)$, that is the subgradient
can be updated in this much time. A pseudocode of the full algorithm
can be found in Section~\ref{alg:exact-1}.

\begin{lem}
\label{lem:sparse_subgradients} For integer valued $f$ with $|f(S)|\leq M$
for all $S$ the subgradient $g(x)$ has at most $3M$ non-zero entries
for all $x\in[0,1]^{n}$.\end{lem}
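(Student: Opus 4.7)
The plan is to combine the $\ell_1$-norm bound from Lemma~\ref{lem:bounded-subgradients} with the integrality of $f$. By definition, $g(x)_{P_k} = f(P[k]) - f(P[k-1])$, and since $f$ takes only integer values, every coordinate of $g(x)$ is an integer. Hence any nonzero coordinate satisfies $|g(x)_{P_k}| \ge 1$.

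Consequently, if $N$ denotes the number of nonzero coordinates of $g(x)$, then
\[
N \;\le\; \sum_{i=1}^{n} |g(x)_i| \;=\; \|g(x)\|_1.
\]
Applying Lemma~\ref{lem:bounded-subgradients} (which requires only $|f(S)| \le M$ for all $S$) gives $\|g(x)\|_1 \le 3M$, and we conclude $N \le 3M$.

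There is essentially no obstacle here: the entire argument is a one-line consequence of integrality plus the previously proved subgradient $\ell_1$-bound. The only subtlety to double-check is that the subgradient formula $g(x)_{P_k} = f(P[k]) - f(P[k-1])$ indeed produces integer entries whenever $f$ is integer valued, which is immediate since it is a difference of two values of $f$.
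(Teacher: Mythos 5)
Your proof is correct and is essentially identical to the paper's own argument: both combine the $\ell_1$-bound $\norm{g(x)}_1 \le 3M$ from Lemma~\ref{lem:bounded-subgradients} with the observation that integrality of $f$ forces each nonzero coordinate of $g(x)$ to have absolute value at least $1$, so there can be at most $3M$ of them.
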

\begin{proof}
By Lemma~\ref{lem:bounded-subgradients} we know that $\norm{g(x)}_{1}\leq3M$.
However, since $g(x)_{P_{i}}=f(P[i])-f(P[i-1])$ and since $f$ is
integer valued, we know that either $g(x)_{P_{i}}=0$ or $\left|g(x)_{P_{i}}\right|\geq1$.
Consequently, there are at most $3M$ values of $i$ for which $g(x)_{i}\neq0$.\end{proof}
\begin{lem}
\label{lem:subgradient-update} With $O(n\cdot\time)$ preprocessing
time the following data structure can be maintained. Initially, one
is input $x^{(0)}\in[0,1]^{n}$ and $g(x_{0})$. Henceforth, for all
$i,$ given $g(x^{(i)})$ and a vector $e^{(i)}$ which is $k$-sparse,
in $O(k\log n+k\time+M\time\log n)$ time one can update $g(x^{(i)})$
to the gradient $g(x^{(i+1)})$ for $x^{(i+1)}=x^{(i)}+e^{(i)}$.\end{lem}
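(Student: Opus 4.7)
The plan is to maintain a balanced binary search tree (BST) $\mathcal{T}$ whose leaves are the coordinates of $[n]$ arranged in the order of the permutation $P=P_{x^{(i)}}$ consistent with the current point, keyed on their $x$-values. At every internal node $v$ spanning a contiguous leaf range $[a,b]$ in $P$, we store the value $\sigma(v)\defeq f(P[b])-f(P[a-1])$, which by Lemma~\ref{lem:subgrad_interval} equals $\sum_{i=a}^{b} g(x)_{P_i}$. The preprocessing builds $\mathcal T$ from the initially-given $P_{x^{(0)}}$ and computes all the node sums in a single upward pass using two evaluations of $f$ per internal node, for a total of $O(n\cdot\time + n\log n)$ time. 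The individual subgradient values at the leaves are also stored so that the current $g$ can be read off in $O(n)$ time.

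To implement the update, I would split $e^{(i)}=e^{+}-e^{-}$ into its non-negative and non-positive parts, each of sparsity at most $k$, and handle each in turn; by Lemma~\ref{lem:subgrad_monotonicity}, the $e^{+}$-step can only decrease the subgradient at untouched coordinates, and the $e^{-}$-step can only increase it. For a single (say) positive sub-step with support $S\subseteq[n]$ of size at most $k$, first delete the $|S|$ affected leaves from $\mathcal T$, update their $x$-coordinates, and reinsert them at their new positions; each such deletion/insertion touches $O(\log n)$ ancestors whose stored sums must be recomputed, and each recomputation costs $O(\time)$ via Lemma~\ref{lem:subgrad_interval}, for a total of $O(k\log n + k\cdot\time)$. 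After these edits, the leaves not in $S$ retain their relative order in $P$, but their $g$-values may have shifted; I must find all such changes and update the leaf values and node sums.

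To locate the changed untouched coordinates I would perform a top-down traversal of $\mathcal T$: at each visited node $v$ spanning $[a,b]$, recompute the \emph{new} value $\sigma_{\text{new}}(v)=f(P[b])-f(P[a-1])$ in $O(\time)$ time and compare with the stored $\sigma(v)$; if they agree, then by Lemma~\ref{lem:subgrad_monotonicity} every $g$-value in the untouched portion of that range changed in one consistent direction, so equality of the sums forces no change at all within the range and the subtree can be skipped. Otherwise recurse on both children and, upon reaching a leaf whose value changed, write the new entry. Because the total number of changed entries is at most $\|g(x^{(i)})\|_0 + \|g(x^{(i+1)})\|_0 = O(M)$ by Lemma~\ref{lem:sparse_subgradients}, only $O(M)$ root-to-leaf paths are explored, giving $O(M\log n)$ visited nodes and thus $O(M\cdot\time\log n)$ total work for this search. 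Summing the contributions from the sparse-edit step and the BST search over both sub-steps yields the claimed $O(k\log n + k\cdot\time + M\cdot\time\log n)$ update time.

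The main obstacle I anticipate is justifying that, after the $O(k)$ deletions and reinsertions have shuffled the tree, the stored sums $\sigma(v)$ on untouched subtrees still correctly represent the \emph{old} subgradient sum of their current leaf set so that the sign-consistent-change argument applies verbatim; this is resolved by noting that the stored $\sigma(v)$ are defined purely in terms of current $P$-positions, not of the stale values, so they are automatically refreshed by the deletion/insertion bookkeeping, and the monotonicity of Lemma~\ref{lem:subgrad_monotonicity} is exactly what licenses the ``equal sum $\Rightarrow$ no leaf changed'' pruning.
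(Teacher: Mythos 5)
Your overall plan (a balanced BST ordered by the permutation, interval sums via Lemma~\ref{lem:subgrad_interval}, splitting $e^{(i)}$ into sign-homogeneous parts, and a top-down ``equal sums $\Rightarrow$ prune'' search justified by Lemma~\ref{lem:subgrad_monotonicity} with an $O(M)$ bound on the number of changes from Lemma~\ref{lem:sparse_subgradients}) is the same as the paper's. However, there is a genuine gap in how you handle the $k$ touched coordinates, and it is exactly the point your last paragraph tries to wave away. The monotonicity lemma only constrains the \emph{untouched} coordinates: for a non-negative sub-step their subgradient entries can only decrease, but the entries at coordinates in the support $S$ of $e^{(i)}$ can move in either direction. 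If the stored sum $\sigma(v)$ at a node still aggregates the \emph{stale} values of touched leaves inside its range, then ``stored sum $=$ new sum'' does not imply that nothing changed in that subtree: an increase at a touched leaf can cancel decreases at untouched leaves, and your pruning rule would silently skip genuinely changed entries. The paper's proof avoids this by first computing $g_j^{(i+1)}$ explicitly for every $j\in S$ (cost $O(k\time)$) and adding the differences $g_j^{(i+1)}-g_j^{(i)}$ along each root-to-leaf path (cost $O(k\log n)$) \emph{before} the top-down search; after that correction every remaining discrepancy comes only from untouched coordinates, all of one sign, so zero discrepancy really does force equality throughout the subtree. Your proposal never recomputes the touched leaves' values at all, so the correctness of the pruning is not established.

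Your proposed alternative fix makes matters worse rather than better: you say the ancestors' stored sums are ``recomputed'' via Lemma~\ref{lem:subgrad_interval} during the deletions/reinsertions, i.e.\ refreshed to the \emph{new} interval sums. But then at those nodes the subsequent comparison of the stored sum against the freshly computed new sum is vacuously an equality, and the search prunes precisely the subtrees near the moved leaves, which is where changed untouched coordinates are most likely to sit. The paper instead explicitly ``retains the old values as is'' during the restructuring, maintaining the aggregates purely by adding children's stored values under rotations (no oracle calls), which is what keeps the old-versus-new comparison meaningful. Finally, your cost accounting for the edit phase is off: $k$ reinsertions, each recomputing $O(\log n)$ ancestor sums at $O(\time)$ apiece, is $O(k\,\time\log n)$, not $O(k\log n + k\,\time)$, and for $k$ not bounded by $M$ this exceeds the claimed update time; the paper's version needs only $O(k\log n + k\,\time)$ for this phase because no oracle calls are made during rebalancing and the only per-coordinate oracle work is the $O(k)$ explicit evaluations of $g_j^{(i+1)}$ for $j\in S$.
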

\begin{proof}
The main idea is the following. Suppose $e^{(i)}$ is non-negative
(we later show how to easily reduce to the case where all coordinates
in $e^{(i)}$ have the same sign and the non-positive case is similar)
Thus, by Lemma~\ref{lem:subgrad_monotonicity}, for all coordinates
not in support of $e^{(i)},$ the gradient goes down. Due to Lemma
\ref{lem:sparse_subgradients}, the total number of change is $O(M)$,
and since we can evaluate the sum of gradients on intervals by Lemma
\ref{lem:subgrad_interval}, a binary search procedure allows us to
find \emph{all }the gradient changes in $O(M\log n\cdot\time)$ time.
We now give full details of this idea.

We store the coordinates of $x^{(i)}$ in a balanced binary search
tree (BST) with a node for each $j\in[n]$ keyed by the value of $x_{j}^{(i)}$;
ties are broken consistently, e.g. by using the actual value of $j$.
We take the order of the nodes $j\in[n]$ in the binary search tree
to define the permutation $P^{(i)}$ which we also store explicitly
in a link-list, so we can evaluate $f(P^{(i)}[k])$ in $O(\time)$
time for any $k$.\textbf{ }Note that each node of the BST corresponds
to a subinterval of $P^{(i)}$ given by the children of that node
in the tree. At each node of the BST, we store the sum of $g(x^{(i)})_{j}$
for all children $j$ of that node, and call it the \emph{value }of
the node. Note by Lemma \ref{lem:subgrad_interval} each individual
such sum can be computed with 2 calls to the evaluation oracle. Finally,
in a linked list, we keep all indices $j$ such that $g(x^{(i)})_{j}$
is non-zero and we keep pointers to them from their corresponding
node in the binary search tree. Using the binary search tree and the
linked list, one can clearly output the subgradient. Also, given $x^{(0)}$
, in $O(n\cdot\time)$ time one can obtain the initialization. What
remains is to describe the update procedure. 

We may assume that all non-zero entries of $e^{(i)}$ are the same
sign; otherwise write $e^{(i)}:=e_{+}^{(i)}+e_{-}^{(i)}$, and perform
two updates. WLOG, lets assume the sign is + (the other case is analogous).
Let $S$ be the indices of $e^{(i)}$ which are non-zero. 

First, we change the key for each $j\in[n]$ such that $x_{j}^{(i+1)}\neq x_{j}^{(i)}$
and update the BST. Since we chose a consistent tie breaking rule
for keying, only these elements $j\in[n]$ will change position in
the permutation $P^{(i+1)}$. Furthermore, performing this update
while maintaining the subtree labels can be done in $O(k\log n)$
time as it is easy to see how to implement binary search trees that
maintain the subtree values even under rebalancing. For the time being,
we retain the old values as is. 

For brevity, let $g^{(i)}$ and $g^{(i+1)}$ denote the gradients
$g(x^{(i)})$ and $g(x^{(i+1)})$, respectively. Since we assume all
non-zero changes in $e^{(i)}$ are positive, by Lemma \ref{lem:subgrad_monotonicity},
we know that $g_{j}^{(i+1)}\le g_{j}^{(i)}$ for all $j\notin S$.
First, since $|S|\leq k$, for all $j\in S,$ we go ahead and compute
$g_{j}^{(i+1)}$ in $O(k\time)$ time. For each such $j$ we update
the value of the nodes from $j$ to the root, by adding the difference
$(g_{j}^{(i+1)}-g_{j}^{(i)})$ to each of them. Next, we perform the
following operation top-down start at the root: at each node we compare
the current subtree value stored at this node with what the value
actually should be with $g^{(i+1)}$ . Note that since we know $P^{(i+1)}$,
the latter can be computed with 2 evaluation queries. The simple but
crucial observation is that if at any node $j$ these two values match,
then we are guaranteed that $g_{k}^{(i+1)}=g_{k}^{(i)}$ for all $k$
in the tree rooted at $j$ and we do not need to recurse on the children
of this node. The reason for equality is that for all the children,
we must have $g_{k}^{(i+1)}\leq g_{k}^{(i)}$ by Lemma \ref{lem:subgrad_monotonicity}
, and so if the sum is equal then we must have equality everywhere.
Since there are at most $O(M)$ coordinates change, this takes $O(M\time\log n)$
for updating \emph{all }the changes to $g^{(i+1)}$ for the binary
search tree. During the whole process, whenever a node changes from
non-zero to zero or from zero to non-zero, we can update the linked-list
accordingly. 
\end{proof}

\begin{proof}[Proof of Theorem~\ref{thm:runtime_pseudopoly}]
 We apply Lemma~\ref{lem:framework} giving the precise requirements
of our subgradient oracle. We know that the subgradients we produce
are always $O(M)$ sparse by Lemma~\ref{lem:sparse_subgradients}
and satisfy $B^{2}=O(M^{2})$. Consequently, we can simply instantiate
Lemma~\ref{lem:subgradient-update} with $k=O(M)$ to obtain our
algorithm. Furthermore, since $f$ is integral we know that so long
as we have a set additive error less than $1$, i.e. $\epsilon<1$,
the set is a minimizer. Consequently, we can minimize in the time
given by the cost of adding the cost of Lemma~2, with the Lemma~\ref{lem:sparse_subgradients}
initialization cost, plus the Lemma~\ref{lem:sparse_subgradients}
cost for $T=O(nM^{2})$ iterations, yielding 
\[
O\left(n(\runtime+\log n+M^{3})+n+M\runtime+(M\log n+M\runtime\log n)\cdot nM^{2}\right)=O(nM^{3}\time\log n)\,.
\]

\end{proof}

\subsection{Subquadratic Additive Approximation Algorithm \label{sec:fast_sfm:subquad}}

Here we provide the first subquadratic  additive approximation algorithm
for submodular function minimization. Throughout this section we assume
that $f$ is real valued with $|f(S)|\leq1$ for all $S\subseteq[n]$.
Our goal is to provide a randomized algorithm that produces a set
$S\subseteq[n]$ such $\E f(S)\leq\OPT+\epsilon$. The primary result
of this section is showing the following, that we can achieve this
in $O(n^{5/3}\epsilon^{-2}\log^{4}n)$ time:
\begin{thm}
\label{thm:runtime_additive} Given a submodular function $f:2^{[n]}\rightarrow\R$
with $|f(S)|\leq1$ for all $S\subseteq V$, and any $\eps>0$, we
we can compute a random set $S$ such that $\E f(S)\leq\OPT+\epsilon$
in time $O(n^{5/3}\epsilon^{-2}\time\log^{4}n)$.
\end{thm}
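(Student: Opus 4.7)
The plan is to invoke the subgradient descent framework of Lemma~\ref{lem:framework} with the Lovasz extension $\hat f$ over $\chi=[0,1]^n$, using a batched stochastic subgradient oracle. Since $|f(S)|\le 1$ for all $S$, Lemma~\ref{lem:bounded-subgradients} gives $\|g(x)\|_1\le 3$ and hence $\|g(x)\|_2^2\le 9$, so the target variance parameter is $B^2=O(1)$ and the total iteration budget is $T=O(n/\eps^2)$. To beat the quadratic barrier I need each stochastic update to cost $\otilde(n^{2/3}\cdot\time)$ amortized.

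The construction runs in batches of length $L$ to be chosen. At the start of each batch I recompute the exact Lovasz subgradient $g^{(0)}=g(x^{(0)})$ in $O(n\cdot\time+n\log n)$ time by sorting $x^{(0)}$. Within the batch I maintain the running stochastic subgradient as
\[
    \tilde g^{(j)} \;=\; g^{(0)} + \sum_{i=1}^{j} \tilde d^{(i)},
\]
where $\tilde d^{(i)}$ is an unbiased, importance-sampled estimator of the difference $d^{(i)} := g(x^{(i)})-g(x^{(i-1)})$ that uses $s_i$ samples. Unbiasedness of $\tilde d^{(i)}$ propagates to $\E[\tilde g^{(j)}]=g(x^{(j)})$, fulfilling invariant (b) of Lemma~\ref{lem:framework}. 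To sample from $d^{(i)}$ cheaply I split each edit $e^{(i)}=x^{(i)}-x^{(i-1)}$ into its positive and negative parts; Lemma~\ref{lem:subgrad_monotonicity} then pins the sign of $d^{(i)}_j$ for every $j$ outside $\supp(e^{(i)})$ (non-positive on the positive-edit pass, non-negative on the negative-edit pass). Thus, in the current permutation $P_{x^{(i)}}$, the coordinates outside $\supp(e^{(i)})$ partition into at most $k+1$ contiguous intervals, each of fixed sign, and Lemma~\ref{lem:subgrad_interval} computes the sum of $d^{(i)}$ on any such interval with two oracle queries. A balanced BST indexed by the permutation, augmented with interval sums of $|d^{(i)}|$, then yields importance samples with probability $\propto |d^{(i)}_j|$ in $O(\time\log n)$ per sample after $O(k\cdot\time\log n)$ preprocessing per step, where $k$ is the sparsity of the accumulated edit since the start of the batch.

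For the variance, each importance sample has second moment at most $\|d^{(i)}\|_1^2=O(1)$ by Lemma~\ref{lem:bounded-subgradients}, so $\mathrm{Var}(\tilde d^{(i)})=O(1/s_i)$ and the accumulated variance within the batch is $O\!\bigl(\sum_{i\le L} 1/s_i\bigr)$. Choosing $s_i$ mildly increasing (for instance $s_i=\Theta(i\log L)$ or $\Theta(i^2)$) keeps this sum $O(1)$ and preserves invariant (c) with $B^2=O(1)$. Per-batch cost is $O(n\cdot\time)$ for the deterministic reset plus $\otilde\bigl(\sum_i s_i\cdot\time\bigr)$ for sampling and maintaining the augmented BST, while cumulative edit sparsity $k$ grows with $\sum_i s_i$. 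Amortizing over $T/L$ batches and optimizing in $L$, balancing the $n/L$ setup cost against the per-batch sampling/data-structure cost (each inflated by the relevant $\log n$ factors from the BST and from sampling within intervals), gives the claimed $\tilde O(n^{5/3}\cdot\time/\eps^2)$.

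The main obstacles I expect are two. First, maintaining the permutation-indexed augmented BST correctly across a batch: edits change both the keys and the subtree sum labels, and these must stay consistent with the current $g(x^{(i)})-g(x^{(0)})$ sums on intervals as samples accumulate, similar in spirit to the data structure of Lemma~\ref{lem:subgradient-update} but now supporting \emph{sampling} rather than exact recovery. Second, pinning down the right exponent: the variance-vs-sparsity tradeoff tempts a more aggressive $L$, but the growth of $k$ (and the resulting per-step BST-maintenance cost, logarithmic factors from sampling within intervals, and the subtlety that a batch must terminate before the cumulative sparsity saturates at $n$) is exactly what makes $5/3$ rather than $3/2$ the right balance; carrying through these logarithmic losses is the delicate part of the accounting.
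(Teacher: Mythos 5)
Your proposal matches the paper's own proof in all essentials: batched stochastic subgradient descent on the Lovasz extension with an exact gradient recomputation at the start of each batch, unbiased difference estimators built from the monotonicity and interval-sum lemmas via a permutation-indexed augmented BST (this is exactly the sampling procedure of Lemma~\ref{lem:subgradient-update_additive-1}), per-step sample counts growing within the batch, and the batch length balanced at $\Theta(n^{1/3})$ against the cubic growth of cumulative sparsity to give $\tilde{O}(n^{5/3}\time/\eps^{2})$. The one caveat is in your sample schedule: $s_{i}=\Theta(i\log L)$ works, but $s_{i}=\Theta(i^{2})$ would inflate the cumulative sparsity to $\Theta(i^{3})$ and the per-batch cost to $\Theta(L^{4})$, degrading the balance to roughly $n^{7/4}$; the paper instead takes $s_{i}=i$ and absorbs the resulting $O(\log n)$ variance into the polylogarithmic factors.
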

The proof of this theorem has two parts. Note that the difficulty
in the real-valued case is that we can no longer assume the Lovasz
gradients are sparse, and so we cannot do naive updates. Instead,
we use the fact that the gradient has small $\ell_{2}$norm to get
sparse \emph{estimates }of the gradient. This is the first part where
we describe a sampling procedure which given any point $x$ and a
$k$-sparse vector $e$, returns a good and sparse estimate to the
\emph{difference }between the Lovasz gradient at $x+e$ and $x$.
The second issue we need to deal with is that if we naively keep using
this estimator, then the error (variance) starts to accumulate. The
second part then shows how to use the sampling procedure in a ``batched
manner'' so as to keep the total variance under control, restarting
the whole procedure with a certain frequency. A pseudocode of the
full algorithm can be found in Section \ref{alg:exact-1}.
\begin{lem}
\label{lem:subgradient-update_additive-1} Suppose a vector $x\in[0,1]^{n}$
is stored in a BST sorted by value. Given a $k$-sparse vector $e$
which is either non-negative or non-positive, and an integer $\ell\geq1$,
there is a randomized sampling procedure which returns a vector $z$
with the following properties: (a) $\E[z]=g(x+e)-g(x)$, (b) $\E[\norm{z-\E[z]}_{2}^{2}]=O(1/\ell)$,
and (c) the number of non-zero coordinates of $z$ is $O(\ell)$.
The time taken by the procedure is $O((k+\ell)\cdot\time\log^{2}n)$.\end{lem}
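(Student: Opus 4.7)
The plan is to build $z$ as an $\ell$-sample importance-sampling estimator of the difference vector $d := g(x+e)-g(x)$, using $\|d\|_1 = O(1)$ to control variance and a binary-search data structure to make each sample cost $O(\time\log^2 n)$. WLOG take $e \geq 0$; the non-positive case is symmetric. Let $S := \supp(e)$, so $|S|\leq k$. By Lemma~\ref{lem:bounded-subgradients} with $M=1$, $L := \|d\|_1 \leq \|g(x)\|_1 + \|g(x+e)\|_1 \leq 6$; by Lemma~\ref{lem:subgrad_monotonicity}, $d_i \leq 0$ for every $i\notin S$; and $\sum_i d_i = 0$ by telescoping $\sum_i g(\cdot)_i = f([n])$, so $A := \sum_{i\notin S}|d_i| = \sum_{s\in S}d_s \leq \sum_{s\in S}|d_s|$, giving $A \leq L/2$. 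The estimator is textbook importance sampling: draw $j_1,\dots,j_\ell$ iid with $\Pr[j_m=j] = p_j := |d_j|/L$ and set $z := \tfrac{1}{\ell}\sum_m (d_{j_m}/p_{j_m})\mathbf{1}_{j_m}$. Then $\E[z]=d$, $\E\|z-d\|_2^2 = \sum_i \mathrm{Var}[z_i] \leq \sum_i d_i^2/(\ell p_i) = L^2/\ell = O(1/\ell)$, and $|\supp(z)|\leq\ell$, establishing (a)--(c). What remains is realizing each sample in $O(\time\log^2 n)$.

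The preprocessing has three pieces. (i) Clone the input BST for $x$ and re-insert the $k$ elements of $S$ with shifted keys to obtain a BST for $x+e$; augment both with subtree counts of non-$S$ entries so that for any index $m$, the position $\pi_m$ of the $m$-th non-$S$ element in $P^{(x)}$ (and $\pi'_m$ in $P^{(x+e)}$) is retrievable in $O(\log n)$; total $O(k\log n)$. (ii) Using Lemma~\ref{lem:subgrad_interval}, compute $g(x)_s$ and $g(x+e)_s$ exactly for every $s\in S$ in $O(k\time)$; this yields all $\{d_s\}_{s\in S}$, $A$, and $L$. (iii) Sort $S$ twice, by position in $P^{(x)}$ and by position in $P^{(x+e)}$, and store prefix sums of $g(x)_s$ and $g(x+e)_s$ along each sorted order in $O(k\log k)$; call the resulting lookup $\sigma_y(p) := \sum_{s\in S:\,\mathrm{pos}_y(s)\leq p} g(y)_s$, answerable in $O(\log k)$. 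A single sample is drawn by flipping a biased coin with $\Pr[\text{from }S] = (L-A)/L$ and $\Pr[\text{from }[n]\setminus S] = A/L$; on the $S$-branch, pick $s\in S$ proportionally to $|d_s|$ in $O(\log k)$ using the known weights.

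For the non-$S$ branch, let $v_1,\dots,v_{n-k}$ be the common relative order of non-$S$ elements in $P^{(x)}$ and $P^{(x+e)}$ (identical because $e$ vanishes off $S$, so non-$S$ coordinates are unchanged). Draw $u$ uniform in $[0,A]$ and binary search for the smallest $m$ with $\Phi(m) := \sum_{j\leq m}|d_{v_j}| \geq u$. The key identity is
\[
\Phi(m) \;=\; \bigl[f(P^{(x)}[\pi_m]) - \sigma_x(\pi_m)\bigr] \;-\; \bigl[f(P^{(x+e)}[\pi'_m]) - \sigma_{x+e}(\pi'_m)\bigr],
\]
in which each $f$-term is a single oracle call by Lemma~\ref{lem:subgrad_interval}, each $\sigma$-term is an $O(\log k)$ lookup from (iii), and $\pi_m,\pi'_m$ are obtained in $O(\log n)$ from the augmented BSTs of (i). Each of the $O(\log n)$ binary-search steps thus costs $O(\time+\log n)$, so one sample costs $O(\time\log^2 n)$ and the full procedure runs in $O(k\log n + k\time + k\log k + \ell\,\time\log^2 n) = O((k+\ell)\,\time\log^2 n)$.

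The main obstacle is geometric: the prefix $\{v_1,\dots,v_m\}$ of the common non-$S$ ordering is \emph{not} contiguous in either $P^{(x)}$ or $P^{(x+e)}$ because $S$-elements are interspersed, so Lemma~\ref{lem:subgrad_interval} cannot be invoked on it directly. The subtraction identity for $\Phi(m)$ above bypasses this by evaluating the full ambient range sum via Lemma~\ref{lem:subgrad_interval} and subtracting the $S$-contribution via $\sigma$; this is affordable only because $|S|=k$ is small enough for the $O(k\time)$ and $O(k\log k)$ upfront work on $S$, and without it each query of $\Phi(m)$ would take $\Omega(n)$ oracle calls and destroy the target running time.
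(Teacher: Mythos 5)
Your proposal is correct: it builds exactly the paper's estimator (each sample is $\norm{d}_{1}\sign(d_{j})\mathbf{1}_{j}$ for $j$ drawn proportional to $|d_{j}|$, averaged over $\ell$ independent draws, with variance bounded via Lemma~\ref{lem:bounded-subgradients} and the same-sign structure from Lemma~\ref{lem:subgrad_monotonicity}), but the mechanics of drawing one coordinate differ from the paper's. The paper partitions the non-$S$ coordinates into at most $2k$ intervals that are contiguous in \emph{both} $P_{x}$ and $P_{x+e}$, precomputes each interval mass $D_{r}$ directly via Lemma~\ref{lem:subgrad_interval} (so $\norm{d}_{1}$ is obtained by summing $|d_{j}|$ and $|D_{r}|$, with no need for your zero-sum identity), samples an interval or an $S$-coordinate proportional to its mass, and then drills down to a singleton by repeatedly splitting the chosen interval at the BST node nearest the root and re-sampling proportional to sub-interval sums, $O(\log n)$ rounds at $O(\time+\log n)$ each. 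You instead avoid the interval decomposition entirely: you get the non-$S$ mass $A$ from $\sum_{i}d_{i}=0$, and sample by inverse transform, binary-searching a uniform threshold against the monotone prefix function $\Phi(m)$, evaluated as a full-prefix oracle call minus a precomputed $S$-prefix correction in each permutation, with order-statistic BST queries supplying the positions $\pi_{m},\pi'_{m}$. Both routes cost $O((\time+\log n)\log n)$ per sample and lean on the same three structural lemmas; the paper's decomposition keeps every oracle query on an interval contiguous in both permutations so Lemma~\ref{lem:subgrad_interval} applies verbatim, while your subtraction identity trades that for extra bookkeeping (non-$S$ subtree counts, the two sorted $S$-prefix tables), and your ``clone'' of the BST must be realized by path-copying/persistence rather than a full copy to stay within the stated $O(k\log n)$ preprocessing --- a detail you should make explicit, but one that is standard and does not affect correctness or the asymptotics.
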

\begin{proof}
We assume that each non-zero value of $e$ is positive as the other
case is analogous. Note that $x$ is stored in a BST  , and the permutation
$P_{x}$ consistent with $x$ is stored in a doubly linked list. Let
$S$ be the set of positive coordinates of $e$ with $|S|=k$ and
let $y$ denote the vector $x+e$. We compute $P_{y}$ in $O(k\log n)$
time. 

Let $I_{1},\ldots,I_{2k}\subseteq[n]$ denote the subsets of the coordinates
that correspond to the intervals which are contiguous in both $P_{x}$
and $P_{y}$. Note that these are $\leq2k\mbox{\,such intervals, and some of them can be empty.}$
We store the pointers to the endpoints of each interval in the BST.
This can be done in $O(k\log n)$ time as follows. First compute the
coarse intervals which are contiguous in $P_{x}$ in $O(k)$ time.
These intervals will be refined when we obtain $P_{y}$. In $O(k\log n)$
time, update the BST so that for every node we can figure out which
coarse interval it lies in $O(\log n)$ time. This is done by walking
up the BST for every end point of all the $k$ intervals and storing
which ``side'' of the interval they lie in. Given a query node,
we can figure out which interval it lies in by walking up the BST
to the root. Finally, for all nodes in $S$, when we update the BST
in order to obtain $P_{y}$, using the updated data structure in $O(\log n)$
time figure out which coarse interval it lies in and refine that interval. 

For each $j\in S,$ we compute $d_{j}\defeq g(y)_{j}-g(x)_{j}$ explicitly.
This can be done in $O(k\time)$time using $P_{x}$and $P_{y}$.For
$r\in[2k]$, we define $D_{r}:=\sum_{j\in I_{r}}(g(y)_{j}-g(x)_{j})$.
Since each $I_{r}$ is a contiguous interval in both $P_{x}$ and
$P_{y},$ Lemma~\ref{lem:subgrad_interval} implies that we can store
all $D_{r}$ in $O(k\cdot\time)$ time in look-up tables. Note that
by monotonicity Lemma~\ref{lem:subgrad_monotonicity} each summand
in $D_{r}$ is of the same sign, and therefore summing the absolute
values of $D_{r}$'s and $d_{j}$'s gives $\norm{g(y)-g(x)}_{1}.$
We store this value of the $\ell_{1}$ norm.

Now we can state the randomized algorithm which returns the vector
$z.$ We start by sampling either a coordinate $j\in S$ with probability
proportional to $|d_{j}|$ , or an interval $I_{r}$ with probability
proportional to $\abs{D_{r}}$. If we sample an interval, then iteratively
sample sub-intervals $I'\subset I_{r}$ proportional to  $\sum_{j\in I'}(g(y)_{j}-g(x)_{j})$
till we reach a single coordinate $j\notin S.$ Note that any $j\in[n]$
is sampled with probability proportional to $\abs{g(y)_{j}-g(x)_{j}}$. 

We now show how to do this iterative sampling in $O((\time+\log n)\log n)$
time. Given $I_{r}$, we start from the root of the BST and find a
node closest to the root which lies in $I_{r}.$ More precisely, since
for every ancestor of the endpoints of $I_{r}$, if it doesn't belong
to the interval we store which ``side'' of the tree $I_{r}$ lies
in, one can start from the root and walk down to get to a node inside
$I_{r}$. This partitions $I_{r}$into two subintervals and we randomly
select $I'$proportional to $\sum_{j\in I'}(g(y)_{j}-g(x)_{j})$ .
Since sub-intervals are contiguous in $P_{y}$ and $P_{x,}$ this
is done in $O(\time)$ time. We then update the information at every
ancestor node of the endpoints of the sampled $I'$ in $O(\log n)$
time. Since each iteration decreases the height of the least common
ancestor of the endpoints of $I'$, in $O(\log n)$ iterations (that
is the height of the tree), we will sample a singleton $j\not\notin S$. 

In summary, we can sample $j\in[n]$ with probability proportional
to $g(y)_{j}-g(x)_{j}$ in $O((\time+\log n)\log n)$ time. If we
sample $j,$ we return the (random) vector 
\[
z:=\norm{g(y)-g(x)}_{1}\cdot\sign(g(y)_{j}-g(x)_{j})\cdot\mathbf{1}_{j}
\]
where recall $\mathbf{1}_{j}$ is the vector with 1 in the $j$th
coordinate and zero everywhere else. Note that given $j,$ computing
$z$ takes $O(\time+\log n)$ time since we have to evaluate $g(y)_{j}$and
$g(x)_{j}.$ Recall, we already know the $\ell_{1}$norm. Also note
by construction, $\E[z]$ is precisely the vector $g(y)-g(x).$ To
upper bound the variance, note that

\[
\E[\norm{z-\E z}_{2}^{2}]\leq\E[\norm z_{2}^{2}]=\norm{g(y)-g(x)}_{1}^{2}\leq9\cdot\max_{S\subseteq V}|f(S)|\leq9
\]
by Lemma~\ref{lem:bounded-subgradients} and the fact that $|f(S)|\leq1$.
Also observe that $z$ is $1$-sparse. 

Given $\ell,$ we sample independently $\ell$ such random $z$'s
and return their \emph{average. }The expectation remains the same,
but the variance  scales down by $\ell.$ The sparsity is at most
$\ell.$The total running time is $O(k(\time+\log n)+\ell(\time+\log n)\log n)$.
This completes the proof of the lemma.

\end{proof}
We now complete the proof of \emph{Theorem \ref{thm:runtime_additive}.}
\begin{proof}
(\emph{Theorem \ref{thm:runtime_additive}}) The algorithm runs in
batches (as mentioned before, the pseudocode is in Section \ref{alg:exact-1}.)
At the beginning of each batch, we have our current vector $x^{(0)}$
as usual stored in a BST. We also compute the Lovasz gradient $g^{(0)}=g(x^{(0)})$
spending $O(n\log n\time)$ time. The batch runs for $T=\Theta(n^{1/3})$
steps. At each step $t\in[T],$ we need to specify an estimate $\tilde{g}^{(t)}$
to run the (stochastic) subgradient procedure as discussed in Lemma
\ref{lem:framework}. For $t=0$, since we know $g^{(0)}$ explicitly,
we get $\tilde{g}^{(0)}$ by returning $\norm{g^{(0)}}_{1}\mathrm{sign}(g_{j}^{(0)})\mathbf{1_{j}}$
with probability proportional to $\abs{g_{j}^{(0)}}$. This is a 1-sparse,
unbiased estimator of $g^{(0)}$ with $O(1)$ variance. Define $z^{(0)}:=\tilde{g}^{(0)}$.
Henceforth, for every $t\geq0,$ the subgradient descent step suggests
a direction $e^{(t)}$ in which to move whose sparsity is at most
the sparsity of $\tilde{g}^{(t)}.$ We partition $e^{(t)}=e_{+}^{(t)}+e_{-}^{(t)}$
into its positive and negative components. We then apply Lemma \ref{lem:subgradient-update_additive-1}
twice: once with $x=x^{(t)}$,$e=e_{+}^{(t)}$, and $\ell=t$, to
obtain random vector $z_{+}^{(t)}$of sparsity $t$, and then with
$x=x^{(t)}+e_{+}^{(t)}$, $e=e_{-}^{(t)}$, and $\ell=t,$ to obtain
the random vector $z_{-}^{(t)}$ of sparsity $t$. The estimate of
the gradient at time $t$ is the sum of these random vectors. That
is, for all $t\geq1,$ define $\tilde{g}^{(t)}:=\sum_{s\leq t}(z_{+}^{(s)}+z_{-}^{(s)})$.
By the property (b) of Lemma \ref{lem:subgradient-update_additive-1}
, $\tilde{g}^{(t)}$ is a valid stochastic subgradient and can be
fed into the framework of Lemma \ref{lem:framework}. Note that for
any $t\in[T]$, the sparsity of $\tilde{g}^{(t)}$is $O(t^{2})$ and
so is the sparsity of $e^{(t)}$ suggested by the stochastic subgradient
routine. Thus, the $t$th step of estimating $z_{+}^{(t)}$and $z_{-}^{(t)}$
requires time $O(t^{2}\time\log^{2}n)$, implying we can run $T$
steps of the above procedure in $O(T^{3}\time\log^{2}n)$ time. 

Finally, to argue about the number of iterations required to get $\eps$-close,
we need to upper bound $\E[\norm{\tilde{g}^{(t)}}_{2}^{2}]$ for every
$t$. Since $\E[\tilde{g}^{(t)}]=g^{(t)}$, the true subgradient at
$x^{(t)}$ and since $\norm{g^{(t)}}_{2}^{2}=O(1)$ by Lemma \ref{lem:bounded-subgradients}
, it suffices to upper bound $\E[\norm{\tilde{g}^{(t)}-\E[\tilde{g}^{(t)}]}_{2}^{2}]$.
But this follows since $\tilde{g}^{(t)}$ is just a sum of independent
$z$-vectors.
\[
\E[\norm{\tilde{g}^{(t)}-\E[\tilde{g}^{(t)}]}_{2}^{2}]=\sum_{s\leq t}\E[\norm{z_{+}^{(s)}-\E[z_{+}^{(s)}]}_{2}^{2}]+\sum_{s\leq t}\E[\norm{z_{-}^{(s)}-\E[z_{-}^{(s)}]}_{2}^{2}]=O\left(\sum_{s\leq t}1/s\right)=O(\log n)
\]

The second-last inequality follows from (c) of Lemma \ref{lem:subgradient-update_additive-1}.
And so, $\E[\norm{\tilde{g}^{(t)}}_{2}^{2}]=\E[\norm{\tilde{g}^{(t)}-\E[\tilde{g}^{(t)}]}_{2}^{2}]+\norm{g^{(t)}}_{2}^{2}=O(\log n)$.
Therefore, we can apply the framework in Lemma \ref{lem:framework}
with $B=O(\log n)$ implying the total number of steps to get $\eps$-approximate
is $N=O(n\log^{2}n\eps^{-2})$. Furthermore, since each batch takes
time $O((n+T^{3})\time\log^{2}n)$ and there are $N/T$ batches, we
get that the total running time is at most 
\[
O\left(n\time\log^{4}n\eps^{-2}\left(\frac{n+T^{3}}{T}\right)\right)=\tilde{O}(n^{5/3}\eps^{-2}\time)
\]
 if $T=n^{1/3}.$ This ends the proof of Theorem \ref{thm:runtime_additive}.

\end{proof}

\subsection{Improvements when Minimizer is Sparse \label{sec:fast_sfm:sparse}}

Here we discuss how to improve our running times when the submodular
function $f$ is known to have a sparse solution, that is, the set
minimizing $f(S)$ has at most $s$ elements. Throughout this section
we suppose we know $s$. 

\begin{thm}
\label{thm:sparse} Let $f$ be a submodular function with a $s$-sparse
minimizer. Then if $f$ is integer valued with $|f(S)|\leq M$ for
all $S\subseteq[n]$ we can compute the minimizer deterministically
in time $O((n+sM^{3})\log n\cdot\time)$. Furthermore if $f$ is real
valued with $|f(S)|\leq1$ for all $S\subseteq[n]$, then there is
a randomized algorithm which in time $\tilde{O}((n+sn^{2/3})\time\eps^{-2})$
returns a set $S$ such that $\E[f(S)]\le\OPT+\eps,$ for any $\eps>0$.
\end{thm}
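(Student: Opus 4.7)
The plan is to exploit sparsity through the subgradient descent analysis rather than through data-structural changes. If $f$ has an $s$-sparse minimizer $S^*\subseteq[n]$, then by Theorem~\ref{thm:lovasz_extension_properties} (Consistency + Minimizers) the indicator $x^* = \mathbf{1}_{S^*} \in \{0,1\}^n$ is a minimizer of $\hat{f}$ on $[0,1]^n$, and it satisfies $\|x^*\|_2^2 \le s$. Since the convergence bound for projected (stochastic) subgradient descent is really $\|x^{(1)} - x^*\|_2 \cdot B / \sqrt{T}$ when the step size is tuned to $\eta = \|x^{(1)}-x^*\|_2/(B\sqrt{T})$, initializing $x^{(1)} = 0$ and setting $\eta$ using the \emph{a priori} bound $\|x^{(1)}-x^*\|_2 \le \sqrt{s}$ yields an $\eps$-approximate minimum of $\hat{f}$ after only $T = O(sB^2/\eps^2)$ iterations, rather than $O(nB^2/\eps^2)$ as in Lemma~\ref{lem:framework}. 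Thus I would first restate Lemma~\ref{lem:framework} with $n$ replaced by $s$ (requiring nothing beyond the sharper form of Theorem~\ref{thm:subgradient_descent}, which is standard), and then reuse the data structures from Sections~\ref{sec:fast_sfm:pseudopoly} and~\ref{sec:fast_sfm:subquad} unchanged.

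For the integer-valued case with $|f(S)|\le M$, Lemma~\ref{lem:sparse_subgradients} gives $B^2 = O(M^2)$ and $k = O(M)$-sparse subgradients, so the sharpened framework needs $T = O(sM^2)$ iterations. Each iteration invokes the BST data structure of Lemma~\ref{lem:subgradient-update} at cost $O(M\time\log n)$ (since the update vector $e^{(i)}$ inherits sparsity $O(M)$ from $\tilde{g}^{(i)}$ via the coordinate-wise projection onto $[0,1]^n$). Adding the $O(n\time)$ preprocessing for the initial subgradient, the total running time becomes $O(n\time + sM^3\log n\cdot\time) = O((n+sM^3)\log n \cdot \time)$. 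As in Theorem~\ref{thm:runtime_pseudopoly}, integrality guarantees that taking $\eps < 1$ recovers an exact minimizer deterministically.

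For the real-valued case with $|f(S)|\le 1$, the same substitution gives $T = \tilde{O}(s/\eps^2)$ total stochastic-subgradient steps (using $B^2 = O(\log n)$, as in the analysis of Theorem~\ref{thm:runtime_additive}). I would run the batched scheme of Section~\ref{sec:fast_sfm:subquad} verbatim: each batch costs $O(n\time + T_b^3 \time \log^2 n)$, comprising an $O(n\time)$ deterministic subgradient recomputation plus $T_b$ inner steps whose $t$-th step costs $O(t^2 \time\log^2 n)$ by Lemma~\ref{lem:subgradient-update_additive-1}. With $T/T_b$ batches the total cost is $\tilde{O}\bigl((n/T_b + T_b^2) \cdot T \cdot \time\bigr)$; choosing $T_b = \Theta(n^{1/3})$ balances the two terms and yields $\tilde{O}(n^{2/3} \cdot s \cdot \time / \eps^2)$, and adding the unavoidable one-shot $O(n\time)$ initialization gives the claimed $\tilde{O}((n + sn^{2/3})\time/\eps^2)$ bound. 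Finally we convert $\hat{f}(\bar{x})$ to a set $S$ with $f(S) \le \hat{f}(\bar{x})$ at the additive cost $O(n\time + n\log n)$ noted in Section~\ref{sec:prelim:lovasz_extension}.

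The only genuinely non-routine step is justifying the sharper $\|x^{(1)} - x^*\|_2$-dependent bound, which requires either invoking a finer statement of projected subgradient descent than Theorem~\ref{thm:subgradient_descent} as stated (but standard in the literature) or re-proving it in two lines from the same potential-function argument. Everything else is a bookkeeping recombination of lemmas already established, so I expect no genuine obstacle beyond carefully carrying the parameter $s$ through the batch-size optimization.
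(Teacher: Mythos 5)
Your proposal is correct, but it takes a genuinely different route from the paper's. The paper keeps the diameter-based statement of Theorem~\ref{thm:subgradient_descent} and instead shrinks the domain, running the descent over $S_{s}=\{x\in[0,1]^{n}\,:\,\sum_{i}x_{i}\leq s\}$ so that $R^{2}\leq s/2$; the price is that projecting onto $S_{s}$ is no longer a coordinate-wise median \textemdash{} it involves a uniform shift by a Lagrange multiplier (Lemma~\ref{lem:projection_step}) that can touch every nonzero coordinate \textemdash{} so the paper needs an extra amortization argument (Lemma~\ref{lem:sparse_framework}), implemented via a global additive offset on the sorted coordinates, to keep the per-iteration update compatible with the BST data structures. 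You instead leave the domain as $[0,1]^{n}$ and invoke the sharper comparator-based form of the projected subgradient bound, $\norm{x^{(1)}-x^{*}}_{2}B/\sqrt{T}$, with $\norm{x^{*}}_{2}\leq\sqrt{s}$ for the indicator of the $s$-sparse minimizer; this gives the same iteration count $T=O(sB^{2}\eps^{-2})$ while the projection stays coordinate-wise, so the sparsity bookkeeping of Lemma~\ref{lem:framework} and the data structures of Sections~\ref{sec:fast_sfm:pseudopoly} and~\ref{sec:fast_sfm:subquad} apply verbatim. The only ingredient beyond the paper's toolkit is the distance-to-optimum version of Theorem~\ref{thm:subgradient_descent} with $\eta$ tuned by the a priori bound $\sqrt{s}$ (legitimate since $s$ is assumed known), which you correctly flag; it is standard and follows from the same potential argument. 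In exchange you avoid the paper's new projection lemma and its amortized implementation, so your route is arguably cleaner, while the paper's route has the mild advantage of using the quoted black-box theorem unchanged; your runtime accounting in both the integer-valued and real-valued regimes matches the claimed bounds.
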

Therefore, if we know that the sparsity of the optimum solution is,
say ${\tt polylog}(n)$, then there is a near linear time approximate
algorithm to get constant additive error. 

To obtain this running time we leverage the same data structures for
maintaining subgradients presented in Section~\ref{sec:fast_sfm:pseudopoly}
and Section~\ref{sec:fast_sfm:subquad}. Instead we show how to specialize
the framework we used presented in Section~\ref{sec:fast_sfm:framework}.
In particular we simply leverage that rather than minimizing the Lovasz
extension over $[0,1]^{n}$ we can minimize over $S_{s}\defeq\{x\in[0,1]^{n}\,|\,\sum_{i\in[n]}x_{i}\leq s\}$.
This preserves the value of the maximum and minimum, but now improves
the convergence of projected (stochastic) subgradient descent (because
the quantity $R$ becomes $s$ from $n$). To show this formally we
simply need to show that the projection step doesn't hurt the performance
of our algorithm asymptotically.

We break the proof of this into 3 parts. First, in Lemma~\ref{lem:projection_step}
we compute how to projection onto $S_{s}$. Then in Lemma~\ref{lem:sparse_framework}
we show how to update our framework. Using these, we prove Theorem~\ref{thm:sparse}.
\begin{lem}
\label{lem:projection_step} For $k\geq0$ and $y\in\R^{n}$ let $S=\{x\in[0,1]^{n}\,|\,\sum_{i}x_{i}\leq k\}$
and 
\[
z=\argmin_{x\in S}\frac{1}{2}\norm{x-y}_{2}^{2}.
\]
Then, we have that for all $i\in[n]$ 
\[
z_{i}=\text{median}(0,y_{i}-\lambda,1)
\]
where $\lambda$ is the smallest non-negative number such that $\sum_{i}z_{i}\leq k$. \end{lem}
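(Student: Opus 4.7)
The plan is to derive the formula directly from the KKT conditions of the convex quadratic program
\[
\min_{x\in\R^n} \tfrac{1}{2}\|x-y\|_2^2 \quad \text{subject to } x_i \geq 0,\ x_i \leq 1,\ \sum_i x_i \leq k.
\]
All constraints are affine, so Slater's condition holds (for example $x = 0$ is strictly feasible in the sum constraint whenever $k>0$; the edge case $k=0$ gives $z=0$ trivially), hence strong duality applies and the KKT conditions are necessary and sufficient.

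First I would write the Lagrangian with multipliers $\lambda \geq 0$ for the sum constraint, $\mu_i \geq 0$ for $x_i \geq 0$, and $\nu_i \geq 0$ for $x_i \leq 1$. Stationarity in $x_i$ gives $z_i - y_i + \lambda - \mu_i + \nu_i = 0$, i.e.
\[
z_i = y_i - \lambda + \mu_i - \nu_i.
\]
Then I would do the standard three-way case analysis using complementary slackness $\mu_i z_i = 0$ and $\nu_i (z_i - 1) = 0$: if $0 < z_i < 1$ then $\mu_i = \nu_i = 0$ and so $z_i = y_i - \lambda$; if $z_i = 0$ then $\nu_i = 0$ and $\mu_i = \lambda - y_i \geq 0$ forces $y_i - \lambda \leq 0$; if $z_i = 1$ then $\mu_i = 0$ and $\nu_i = y_i - \lambda - 1 \geq 0$ forces $y_i - \lambda \geq 1$. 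The three cases combine to give exactly $z_i = \mathrm{median}(0,\, y_i - \lambda,\, 1)$.

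It remains to identify the correct $\lambda$. Complementary slackness on the sum constraint gives $\lambda(\sum_i z_i - k) = 0$, and dual feasibility gives $\lambda \geq 0$. Thus either $\lambda = 0$ (which is the KKT solution precisely when $\sum_i \mathrm{median}(0,y_i,1) \leq k$), or $\lambda > 0$ and $\sum_i z_i = k$. Define $\phi(\lambda) := \sum_i \mathrm{median}(0, y_i - \lambda, 1)$; this map is continuous and non-increasing in $\lambda$, with $\phi(\lambda) \to 0$ as $\lambda \to \infty$. So the smallest $\lambda \geq 0$ making $\phi(\lambda) \leq k$ is well-defined: if $\phi(0) \leq k$ it equals $0$ (and $\sum z_i \leq k$, satisfying complementary slackness trivially); otherwise it is the unique value with $\phi(\lambda) = k$ by the intermediate value theorem. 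In either case the resulting $z$ together with appropriately chosen non-negative $\mu_i, \nu_i$ from the case analysis satisfies KKT, so $z$ is the unique minimizer.

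The only mildly subtle step is the identification of $\lambda$ as \emph{the smallest non-negative} value for which $\sum_i z_i \leq k$. This matters because when $\phi(0) \leq k$ the sum constraint is inactive and the correct multiplier is $0$, not some larger value that would also make the sum $\leq k$; the "smallest" qualifier in the statement handles both regimes uniformly and matches complementary slackness.
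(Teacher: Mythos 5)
Your argument is correct and follows essentially the same route as the paper: the paper introduces a Lagrange multiplier $\lambda$ for the budget constraint, notes the resulting box-constrained problem decouples coordinate-wise to give $z_i=\mathrm{median}(0,y_i-\lambda,1)$, and identifies $\lambda$ via the monotonicity of $\sum_i z_i$ in $\lambda$, which is exactly your KKT derivation written out in full with explicit multipliers for the box constraints. One tiny caveat: your claim that the $\lambda$ with $\phi(\lambda)=k$ is \emph{unique} can fail when $\phi$ is flat at height $k$, but your argument never needs uniqueness, only that the smallest $\lambda\geq 0$ with $\phi(\lambda)\leq k$ satisfies $\phi(\lambda)=k$ (by continuity and monotonicity) whenever it is positive, so complementary slackness holds.
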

\begin{proof}
By the method of Lagrange multiplier, we know that there is $\lambda\geq0$
such that
\[
z=\argmin_{x\in[0,1]^{n}}\frac{1}{2}\norm{x-y}_{2}^{2}+\lambda\sum_{i\in[n]}x_{i}\,.
\]
Since each variable in this problem is decoupled with each other,
we can solve this problem coordinate-wise and get that for all $i\in[n]$
\[
z_{i}=\text{med}(0,y_{i}-\lambda,1).
\]
Since $\sum_{i\in[n]}z_{i}$ decreases as $\lambda$ increases, we
know that $\lambda$ is the smallest non-negative number such that
$\sum_{i\in[n]}z_{i}\leq k$.
\end{proof}
In particular we provide Lemma~\ref{lem:sparse_framework} and improvement
on Lemma \ref{lem:framework}. 
\begin{lem}
\label{lem:sparse_framework} Suppose that for $N\geq sB^{2}\epsilon^{-2}$
and any sequence of $x^{(1)},..,x^{(N)}$ such that that $x^{(i+1)}-x^{(i)}$
is $k^{(i)}$-sparse up to modifications that do not affect the additive
distance between non-zero coordinates with $K\defeq\sum_{i\in[T]}k^{(i)}=O(Nk)$
we can implement a subgradient oracle for $f$ and $x^{(i)},$ denoted
$\tilde{g}(x^{(i)})$, that is $k$-sparse and obeys $\E\norm{\tilde{g}}_{2}^{2}\leq B^{2}$.
Then in time $O(n(\runtime+\log n)+Nk\log n)$ we can compute a set
$S$ such that $\E f(S)\leq\OPT+\epsilon$ (and if the subgradient
oracle is deterministic then the result holds without the expectation).\end{lem}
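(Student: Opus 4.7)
The plan is to parallel the proof of Lemma~\ref{lem:framework}, but with the convex domain $[0,1]^{n}$ replaced by $S_{s}\defeq\{x\in[0,1]^{n}\,:\,\sum_{i}x_{i}\leq s\}$. Since $f$ has an $s$-sparse minimizer $S^{*}$, its indicator $\mathbf{1}_{S^{*}}\in S_{s}$, and by the consistency and minimizer properties in Theorem~\ref{thm:lovasz_extension_properties} together with the rounding argument at the end of Section~\ref{sec:prelim:lovasz_extension}, minimizing $\hat{f}$ over $S_{s}$ is equivalent to SFM. The first gain is that over $S_{s}$ we have $R^{2}=\sup_{x\in S_{s}}\tfrac{1}{2}\norm{x}_{2}^{2}=s/2$, attained at the $0/1$ vertices supported on any $s$ coordinates. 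Applying Theorem~\ref{thm:subgradient_descent} then shows that $T=O(sB^{2}\eps^{-2})\leq N$ iterations suffice to produce an (expected) $\eps$-approximate minimizer of $\hat{f}$, which rounds to a set $S$ with $\E f(S)\leq\OPT+\eps$.

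The one step that actually requires work is implementing the projection onto $S_{s}$ while respecting the subgradient-oracle interface. By Lemma~\ref{lem:projection_step}, the projection sets $x^{(t+1)}_{i}=\mathrm{median}(0,\,x^{(t)}_{i}-\eta\tilde{g}^{(t)}_{i}-\lambda,\,1)$ for the unique smallest $\lambda\geq 0$ enforcing feasibility. I would decompose this into (i) the $k$-sparse subtraction of $\eta\tilde{g}^{(t)}$ supported on the coordinates where $\tilde{g}^{(t)}\neq 0$, followed by (ii) a global shift of every coordinate by $\lambda$ together with clipping to $[0,1]$.

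The key structural point, and the only nontrivial one, is that substep (ii) is benign for the BST-based oracles of Section~\ref{sec:fast_sfm:subgrad}. The Lovasz subgradient depends only on the sorted permutation $P_{x}$, not on the raw values of $x$, and a uniform shift of every coordinate preserves the relative order among unclipped coordinates; coordinates pulled down to $0$ stay below every unclipped coordinate, and those pushed up to $1$ stay above. Hence $P_{x^{(t+1)}}$ differs from $P_{y}$ (for $y$ the post-(i) vector) only at coordinates whose clipping status toggles, and the global shift itself is exactly the type of modification that ``does not affect the additive distance between non-zero coordinates'' in the hypothesis, so the oracle treats it as a no-op on its internal keys.

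Algorithmically, to locate $\lambda$ one can maintain the sorted sequence of the candidate values $x^{(t)}_{i}-\eta\tilde{g}^{(t)}_{i}$ in the existing BST; since only $k$ entries move in substep (i), a binary search over the breakpoints $\{y_{i},y_{i}-1\}$ for the smallest $\lambda$ at which the clipped sum falls below $s$ costs $O(k\log n)$ per iteration. Summing the initialization cost $O(n(\runtime+\log n))$ with $O(k\log n)$ projection overhead per step and charging the oracle cost across the $T$ iterations gives the claimed $O(n(\runtime+\log n)+Nk\log n)$ bound. The main obstacle is to formalize the invariance in the previous paragraph cleanly, since consistent tie-breaking at the clipping boundaries and the handling of freshly clipped coordinates require some bookkeeping, but once the shift is recognized as a pure relabeling of BST keys the analysis reduces to the sparse-update analyses already carried out in Section~\ref{sec:fast_sfm:pseudopoly} and Section~\ref{sec:fast_sfm:subquad}.
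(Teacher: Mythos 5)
Your proposal follows essentially the same route as the paper: restrict the domain to $S_{s}=\{x\in[0,1]^{n}:\sum_{i}x_{i}\leq s\}$ so that $R^{2}=s/2$ in Theorem~\ref{thm:subgradient_descent}, use Lemma~\ref{lem:projection_step} to write the projection as a $k$-sparse update followed by a uniform downward shift with clipping, and observe that the uniform shift preserves the relative order of the unclipped coordinates and so can be realized lazily as an additive offset on the BST keys --- which is exactly the kind of ``modification that does not affect the additive distance between non-zero coordinates'' the hypothesis is worded to absorb, and is how the paper gets the shift itself down to $O(\log n)$ per iteration.

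The one genuine gap is your claim that the projection overhead is $O(k\log n)$ \emph{per iteration}. A single projection can clip arbitrarily many coordinates to $0$: if many coordinates sit at tiny positive values, even a small $\lambda$ zeroes all of them at once, and each such clipped coordinate is a real change that must be propagated both to the BST and to the subgradient-oracle data structure (it does change the additive distances, so it is not covered by the lazy offset). So the per-iteration bound can fail, and the total-sparsity bound $K=O(Nk)$ in the hypothesis only holds via an amortized charging argument, which the paper states and you omit: since the iterates start at $0$, a coordinate can be clipped to $0$ only if some earlier $k$-sparse step made it non-zero, hence the total number of clipping events over all $N$ iterations is $O(Nk)$, each costing $O(\log n)$ --- a mass clipping is ``paid for by the movement that created it.'' With that accounting added your argument matches the paper's proof; your mention of ``bookkeeping for freshly clipped coordinates'' gestures at this but does not supply the charge. (A minor further simplification: the shift is downward, so clipping at $1$ occurs only in the sparse substep and needs no separate treatment in the order-preservation argument.)
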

\begin{proof}
The proof is the same as before, just the size of $R$ improves to
$s$ and we need to deal with this new projection step. However, in
the projection step we set all the coordinates that are less than
$0$ to $0$ and then keep subtracting uniformly (stopping whenever
a coordinate reaches 0) until the maximum coordinate is $\leq1$.
We can do this efficiently by simply maintaining an additive offset
and the coordinate values in sorted order. Then we simply need to
know the number of coordinates above some threshold and the maximum
and the minimum non-zero coordinate to determine what to subtract
up to the point we make the minimum non-zero. We can do this in $O(\log n)$
easily. Now we are not counting the movements that do not set something
to 0 so do not change the additive distances between the non-zero
coordinate. Consequently, an iteration may only move many coordinates
if it sets many things to 0, however that is paid for by the movement
that created it, so we only need $Nk\log n$ time in total to do all
the updates.
\end{proof}
We now have everything we need to prove Theorem~\ref{thm:sparse}
\begin{proof}[Proof of Theorem~\ref{thm:sparse}]
 (Sketch) The proof is the same as in Section~\ref{sec:fast_sfm:pseudopoly}
and Section~\ref{sec:fast_sfm:subquad}. We just use Lemma~\ref{lem:sparse_framework}
instead of the previous framework lemma. To invoke the first data
structure just do the update in batches. the second data structure
was already written for this setting. \end{proof}

\section{Lower Bound}

It is well known that $\Omega(n)$ evaluation oracle calls are needed
to minimize a submodular function. On the other hand, the best way
we know of for certifying minimality takes $\Theta(n)$ subgradient
oracle calls (or equivalently, vertices of the base polyhedron). A
natural question is whether $\Theta(n)$ subgradient oracle calls
are in fact needed to minimize a submodular function. In this section
we answer this in the affirmative. Since each gradient oracle needs
$n$ evaluation oracle calls, this gives an $\Omega(n^{2})$ lower
bound on the number of evaluations required for algorithms which only
access the function via graident oracles. As mentioned in the introduction,
these include the Fujishige-Wolfe heuristic \cite{Wolfe76,Fujishige80},
various version of conditional gradient or Franke Wolfe \cite{frankWolfe,linearConvergentCondGrad}
, and the new cutting plane methods \cite{LSW15}. Note that there
are known lower bounds for subgradient descent that have a somewhat
submodular structure \cite{Nesterov2003} and this suggests that such
a lower bound should be possible, however we are unaware of a previous
information theoretic lower bound such as we provide.

To prove our lower bound, we describe a distribution over a collection
of hard functions and show that any algorithm must make $\Omega(n)$
subgradient calls in expectation\footnote{One can also prove a high probability version of the same result but
for simplicity we don't do it. } and by Yao's minimax principle this will give an $\Omega(n)$ lower
bound on the expected query complexity of any randomized SFM algorithm.
The distribution is the following. Choose $R$ to be a random set
with each element of the universe selected independently with probability
$1/2$. Given $R$, define the function

\[
f_{R}(S)=\begin{cases}
-1 & \text{if }S=R\\
0 & \text{if }S\subsetneq R\text{ or }R\subsetneq S\\
1 & \textrm{otherwise}.
\end{cases}
\]

Clearly the minimizer of $f_{R}$ is the set $R.$ Any SFM algorithm
is equivalent to an algorithm for recognizing the set $R$ via subgradient
queries to $f_{R}$. A subgradient $g$ of $f_{R}$ at any point $x$
corresponds to a permutation $P$ of $\{1,2,\ldots,n\}$ (the sorted
order of $x$). Recall the notation $P[i]:=\left\{ P_{1},P_{2},\ldots,P_{i}\right\} $.
The following claim describes the structure of subgradients. 
\begin{lem}
\label{lem:sub_r}Let $i$ be the smallest index such that $P[i]$
\textbf{is} \textbf{not} a subset of $R$ and $j$ be the smallest
index such that $P[j]$ \textbf{is} a superset of $R$. Then $g(i)=1,$
$g(j)=-1$, and $g(k)=0$ for all $k\in[n]\setminus{i,j}.$\end{lem}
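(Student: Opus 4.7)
The plan is to analyze the structure of each initial segment $P[k]$ for varying $k$ and then telescope using $g_{P_k} = f_R(P[k]) - f_R(P[k-1])$. First I would observe that because $P[i-1] \subseteq R$ but $P[i] \not\subseteq R$, the element $P_i$ added at step $i$ must satisfy $P_i \notin R$; by the symmetric argument, $P_j \in R$. This already forces $i \neq j$. From the first fact, $P_i \in P[k] \setminus R$ for all $k \geq i$, so $P[k]$ is not a subset of $R$ for any $k \geq i$; similarly $P[k] \supseteq R$ for all $k \geq j$.

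Next I would classify $P[k]$ into the four categories determining $f_R$: strict subset of $R$, equal to $R$, strict superset of $R$, or incomparable with $R$. Using the above, for $k < \min(i,j)$ we have $P[k] \subsetneq R$ (so $f_R = 0$) and for $k \geq \max(i,j)$ we have $P[k] \supsetneq R$ (so $f_R = 0$). The remaining indices form the interval strictly between $i$ and $j$, and split into two cases. If $i < j$, then for $i \leq k < j$ the set $P[k]$ is neither a subset nor superset of $R$, so $f_R(P[k]) = 1$. If $j < i$, then for $j \leq k < i$ the set $P[k]$ satisfies both $P[k] \subseteq R$ and $P[k] \supseteq R$, hence $P[k] = R$, so $f_R(P[k]) = -1$; a size comparison ($|P[k]| = k$ versus $|R| = j$) then forces $i = j+1$, so there is actually only the single index $k = j$ in this range.

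With these values in hand, all the telescoping differences $f_R(P[k]) - f_R(P[k-1])$ are zero except at the two transitions. In the case $i < j$, the transitions are at $k = i$ (giving $1 - 0 = 1$) and at $k = j$ (giving $0 - 1 = -1$). In the case $j < i = j+1$, the transitions are at $k = j$ (giving $-1 - 0 = -1$) and at $k = i$ (giving $0 - (-1) = 1$). Either way we conclude $g_{P_i} = 1$, $g_{P_j} = -1$, and $g_{P_k} = 0$ otherwise, matching the lemma. The main thing to be careful about is not overlooking the second case $j < i$, where $P[j]$ happens to equal $R$; everything else is a straightforward table lookup once the key observations $P_i \notin R$ and $P_j \in R$ are in place.
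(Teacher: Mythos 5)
Your proposal is correct and follows essentially the same argument as the paper: a case analysis on whether each prefix $P[k]$ is a strict subset, equal to, a strict superset of, or incomparable with $R$, with your two cases $i<j$ and $j<i$ (where $P[j-1]\subsetneq R$ forces $P[j]=R$ and $i=j+1$) matching the paper's two cases at positions $i$ and $j$. The only difference is organizational — you classify all prefixes globally and then telescope, while the paper argues coordinate-by-coordinate — but the underlying reasoning is the same.
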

\begin{proof}
To see $g(i)=1$, note that $A:=P[i-1]$ is a subset of \textbf{$R$.
}Two cases arise: either $A=R$ in which case $P[i]$ is a strict
superset of $R$\textbf{ }and so $f_{R}(A)=-1$ and $f_{R}(P[i])=0$
implying $g(i)=1$; or $A$ is a strict subset of $R$ in which case
$P[i]$ is neither a subset or a superset, implying $f_{R}(A)=0$
and $f_{R}(P[i])=1$. Similarly, to see $g[j]=-1,$ note that $B:=P[j-1]$
is not a superset of $R$. Two cases arise: either $B$ is a strict
subset of $R$ in which case $P[j]=R$ and we have $f_{R}(B)=0$ and
$f_{R}(P[j])=-1$; or $B$ is neither a subset nor a superset in which
case $P[j]$ is a strict superset of $R$ and we have $f_{R}(B)=1$
and $f_{R}(P[j])=0$. 

For any other $k,$ we have either both $P[k]$ and $P[k-1]$ are
strict subsets of $R$ (if $k<\min(i,j)$), or both $P[k]$ and $P[k-1]$
are strict supersets of $R$ (if $k>\max(i,j)$) , or both are neither
superset nor subset. In all three cases, $g(k)=0\mbox{. }$
\end{proof}
Intuitively, any gradient call gives the following information regarding
$R$: we know elements in $P[i-1]$ lie in $R,$ $P_{i}$ doesn't
lie in $R,$ $P_{j}$ lies in $R,$ and all $P_{k}$ for $k>j$ do
not lie in $R.$ Thus we get $i+n-j+1$ ``bits'' of information.
If $R$ is random, then the expected value of this can be shown to
be $O(1)$, and so $\Omega(n)$ queries are required. We make the
above intuitive argument formal below.

Suppose at some point of time, the algorithm knows a set $A\subseteq R$
and a set $B\cap R=\emptyset.$ The following lemma shows that one
may assume wlog that subsequent subgradient calls are at points $x$
whose corresponding permutation $P$ contains the elements of $A$
as a ``prefix'' and elements of $B$ as a ``suffix''. 
\begin{lem}
\label{lem:known}Suppose we know $A\subseteq R$ and $B\cap R=\emptyset$.
Let $g$ be a subgradient and $g'$ be obtained from $g$ by moving
$A$ and $B$ to the beginning and end of the permutation respectively.
Then one can compute $g$ from $g'$ without making any more oracle
calls.\end{lem}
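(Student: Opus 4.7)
My plan is to use Lemma~\ref{lem:sub_r} applied to both permutations $P$ and $P'$, and then show that the two pieces of information delivered by $g'$ (the ``first non-$R$ element'' and the ``last $R$ element'' of $P'$), combined with the known sets $A$ and $B$ and the permutation $P$, suffice to recover the analogous data for $P$.

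First I would invoke Lemma~\ref{lem:sub_r} twice. For the original permutation $P$, the subgradient $g$ is zero except $g(i)=1$ and $g(j)=-1$, where $i$ is the position of the first element of $P$ not lying in $R$, and $j$ is the position of the last element of $P$ lying in $R$. For the ``canonicalized'' permutation $P'$, analogously $g'$ is supported on two positions $i',j'$, and crucially the algorithm can read off from $g'$ the actual elements $P'_{i'}$ (the first non-$R$ element in the $P'$ order) and $P'_{j'}$ (the last $R$ element in the $P'$ order). So the algorithm learns one concrete non-$R$ element $u := P'_{i'}$ and one concrete $R$ element $v := P'_{j'}$ from the single call.

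Next I would exploit the structural fact that the map $P\mapsto P'$ only moves elements of $A\cup B$: elements outside $A\cup B$ (the ``middle'' elements) appear in the same relative order in $P$ and $P'$. I would argue the following two reductions. For the ``first non-$R$'' position in $P$: any middle element lying in front of $u$ in $P'$ must lie in $R$ (otherwise $i'$ would have been earlier), and the same middle elements lie in front of $u$ in $P$; so the only candidates for the first non-$R$ element of $P$ are $u$ itself and the elements of $B$. Hence $P_i$ is whichever of $u$ and the elements of $B$ occurs earliest in $P$, information the algorithm has. A symmetric argument for the ``last $R$'' position: middle elements after $v$ in $P'$ are all non-$R$, so the only candidates for the last $R$ element of $P$ are $v$ and the elements of $A$, and $P_j$ is whichever occurs latest in $P$.

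Once $P_i$ and $P_j$ are identified, the vector $g$ is pinned down: it has a $+1$ in coordinate $P_i$, a $-1$ in coordinate $P_j$, and zeros elsewhere, by Lemma~\ref{lem:sub_r}. The only place that requires care is verifying the case analysis at the boundary, in particular confirming that $u\notin R$ and $v\in R$ are indeed ensured by $g'$ (which follows from the two cases in the proof of Lemma~\ref{lem:sub_r}), and handling the degenerate situation where $A\cup B$ already covers $[n]$ so the middle section is empty. I expect the main obstacle to be purely bookkeeping: making sure that after the relabeling $P\to P'$ no new candidates for the extreme positions can appear and that ties (e.g.\ $u$ being an element of $B$, which cannot happen here since $P'_{i'}$ is determined inside the unknown middle portion whenever it lies in the middle) are handled cleanly. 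None of this requires any further oracle access, so the lemma follows.
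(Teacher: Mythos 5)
Your argument is correct, and it takes a somewhat different route from the paper's. Both proofs hinge on Lemma~\ref{lem:sub_r} and on the fact that elements outside $A\cup B$ keep their relative order when $A$ and $B$ are moved to the ends, but the paper proceeds coordinate-wise: for each $a=P_k$ it translates $g_{P_k}$ into a condition on $g'_a$ together with whether $P[k-1]$ meets $B$ (``$g_{P_k}=1$ iff $g'_a=1$ and $P[k-1]\cap B=\emptyset$'', with the $-1$ case analogous), leaving the remaining corner cases to an unstated case analysis. You instead reconstruct the support of $g$ globally: you read off from $g'$ the first non-$R$ element $u$ and the last $R$-element $v$ in the canonical order, and then observe that the $+1$ of $g$ sits at whichever of $\{u\}\cup B$ comes earliest in $P$ and the $-1$ at whichever of $\{v\}\cup A$ comes latest. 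This buys a cleaner statement of the simulation rule and, notably, it handles without extra effort the boundary situation in which the $+1$ of $g$ lands on an element of $B$ (or the $-1$ on an element of $A$), a case where the paper's displayed per-coordinate rule needs its implicit case analysis. One small correction: your parenthetical claim that $u\in B$ ``cannot happen'' is not right --- it occurs exactly when every element outside $A\cup B$ lies in $R$ --- but this does not damage your proof, since in that case $\{u\}\cup B=B$ and your earliest-in-$P$ rule still returns the correct coordinate, and your main argument (any element outside $A\cup B$ preceding $u$ in the canonical order lies in $R$, hence the only candidates for the first non-$R$ element of $P$ are $u$ and the elements of $B$) already covers it.
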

\begin{proof}
Easy by case analysis and Lemma \ref{lem:sub_r}. Let $P$ be the
permutation corresponding to $g$. We show that given $g$' and $P$,
we can evaluate $g.$ Let us say we are interested in evaluating $g_{P_{k}}$
and say $P_{k}=a.$ Lemma \ref{lem:sub_r} states that this is 1 iff
$P[k-1]\subseteq R$ and $P[k]$ isn't. Now, if $P[k-1]\cap B\neq\emptyset,$
then we know $g_{P_{k}}=$0. Otherwise, $g_{P_{k}}=$1 iff $P[k-1]\setminus B\cup A\subseteq R$
and $P[k]\setminus B\cup A$ is not, since $A\subseteq R.$ Therefore,
$g_{P_{k}}=$1 iff $g'_{a}=1$ and $P[k-1]\cap B=\emptyset.$ Whether
$g_{P_{k}}=-1$or not can be done analogously. 
\end{proof}
For an algorithm, let $h(k)$ be the expected number of subgradient
calls required to minimize $f_{R}$ when the universe if of size $k$
(note $R$ is chosen randomly by picking each element with probability
1/2). For convenience we also define $h(k)=0$ for $k\leq0$.
\begin{lem}
\label{lem:recur}For $k\geq1$, $h(k)\geq1+\mathbb{\E}_{X,Y}[h(k-X-Y)]$,
where $X,Y$ are independent geometric random variables, i.e. $Pr[X=i]=1/2^{i}$
for $i\geq1$.\end{lem}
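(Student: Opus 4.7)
The plan is to analyze what a single subgradient call reveals in a reduced problem where the algorithm has already narrowed the unknown set to a block of $k$ coordinates, and then convert this into a clean recursion on $k$.

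First, I would invoke Lemma~\ref{lem:known} to restrict WLOG to permutations of the form $(A, \sigma, B)$, where $A \subseteq R$ and $B \cap R = \emptyset$ are already known and $\sigma$ is a permutation of the $k$ unknown coordinates. The key structural fact is that, conditional on the revealed information, the indicators $Z_1, \ldots, Z_k$ for whether each unknown lies in $R$ are iid $\mathrm{Bernoulli}(1/2)$: the prior on $R$ is a product measure, and conditioning on any subset of coordinates preserves independence on the complement. In particular, the distribution of the single-query outcome depends only on $k$, not on the specific $A$, $B$, or $\sigma$.

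Second, I would translate Lemma~\ref{lem:sub_r} to the unknown block. With $i, j$ as in that lemma, set $U = i - |A|$ (the position of $i$ within the unknown block) and $V = |A| + k - j + 1$ (the position of $j$ counted from the end of the unknown block). Then $Z_1 = \cdots = Z_{U-1} = 1,\ Z_U = 0$ (or $U = k+1$, meaning all unknowns are in $R$), and symmetrically $Z_{k-V+1} = 1,\ Z_{k-V+2} = \cdots = Z_k = 0$ (or $V = k+1$). A brief case analysis, using that $P_i \neq P_j$ and that $Z_a = 0$ forced from the front cannot coexist with $Z_a = 1$ forced from the back, shows that the only possibilities are $U + V \le k$ (the two constraints live on disjoint coordinates) or $U + V = k + 2$ (they abut exactly, so the entire $Z$-vector is determined and the algorithm has identified $R$ with one query).

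Third, I would compute the joint law of $(U,V)$ and derive the recurrence. For $a + b \le k$, disjointness gives
\[
\Pr[U = a,\, V = b] \;=\; (1/2)^{a-1}\cdot(1/2)\cdot(1/2)\cdot(1/2)^{b-1} \;=\; (1/2)^{a+b},
\]
which coincides with $\Pr[X = a]\Pr[Y = b]$ for iid $X, Y \sim \mathrm{Geom}(1/2)$. After the query, the number of remaining unknowns is $k - U - V$ if $U + V \le k$ and $0$ if $U + V = k + 2$; in either case the remaining unknowns are still iid $\mathrm{Bernoulli}(1/2)$, so by definition of $h$ the expected number of additional queries on the reduced instance is at least $h(\mathrm{remaining})$. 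Using the convention $h(m) = 0$ for $m \le 0$, one obtains
\[
h(k) \;\ge\; 1 + \sum_{a + b \le k} (1/2)^{a+b}\, h(k - a - b) \;=\; 1 + \mathbb{E}[h(k - X - Y)],
\]
where in the last equality the terms with $X + Y > k$ contribute $0$. The main obstacle is the case analysis in the second step: because $\Pr[U + V = k + 2] > 0$, $(U, V)$ is not literally an independent pair of geometrics, so one must verify that this ``overlap'' mass is harmlessly absorbed by the boundary convention $h(m) = 0$, which is exactly what the matching of the two distributions on $\{a + b \le k\}$ achieves.
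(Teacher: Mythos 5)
Your proposal is correct and follows essentially the same route as the paper's proof: use Lemma~\ref{lem:sub_r} to see that a query reveals only a prefix and a suffix of the unknown block whose lengths are distributed as independent geometrics on the relevant range, and use Lemma~\ref{lem:known} together with the product structure of the prior to reduce to a fresh instance of size $k-X-Y$. Your treatment is somewhat more careful than the paper's (the explicit case analysis showing only $U+V\le k$ or $U+V=k+2$ can occur, and the observation that the boundary mass is absorbed by the convention $h(m)=0$ for $m\le 0$), but the underlying argument is the same.
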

\begin{proof}
By our observation above, a subgradient of $f$ reveals the identities
of $\min\{X+Y,k\}$ elements, where $X-1=i-1$ and $Y-1=n-j$ ($i,j$
as defined in Lemma \ref{lem:sub_r}) are the lengths of the streaks
of 0's at the beginning and end of the subgradient.

Note that $X$ simply follow a geometric distribution because $Pr[P[i-1]\subseteq R,P_{i}\notin R]=1/2^{i}$.
Similarly, $Y$ also follow the same geometric distribution. In the
case of $X+Y>k$, we have $R$ as a prefix of the permutation.

Finally, as a subgradient call reveals no information about the intermediate
elements in the permutation, by Lemma \ref{lem:known} we are then
effectively left with the same problem of size $k-X-Y$. More formally,
this is because the value of the subgradient queried is independent
of the identities of the elements $P_{i+1},\ldots,P_{j-1}$.\end{proof}
\begin{thm}
$h(n)\geq n/4$, i.e. any algorithm for SFM requires at least $\Omega(n)$
subgradient calls.\end{thm}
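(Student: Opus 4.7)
The plan is to prove by strong induction on $k$ that $h(k) \geq k/4$, using the recursion from Lemma~\ref{lem:recur} together with the convention $h(k) = 0$ for $k \leq 0$ fixed just before that lemma. Taking $k = n$ then yields $h(n) \geq n/4$, and by Yao's minimax principle this lower bounds the expected query complexity of any randomized algorithm on the worst input, since the distribution over $R$ already defines the hard input distribution.

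The key observation that makes the induction go through is that $X + Y \geq 2$ deterministically, as both geometric variables take values at least $1$; hence the argument $k - X - Y$ is strictly less than $k$, so the inductive hypothesis applies whenever $k - X - Y \geq 1$. On the complementary event $k - X - Y \leq 0$ the convention forces $h(k-X-Y) = 0$, which is still $\geq (k-X-Y)/4$ because the right-hand side is non-positive. Combining the two cases yields the clean pointwise bound $h(k - X - Y) \geq (k - X - Y)/4$ valid for every realization of $(X,Y)$. This sidesteps any need for a concavity property of $h$ and lets linearity of expectation do the rest.

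Given this pointwise bound, the inductive step reduces to a one-line computation: Lemma~\ref{lem:recur} together with monotonicity of expectation gives $h(k) \geq 1 + \E[h(k-X-Y)] \geq 1 + \E[(k-X-Y)/4]$, and since $X, Y$ are geometric with $\Pr[X=i] = 2^{-i}$ each has mean $\sum_{i \geq 1} i \cdot 2^{-i} = 2$, so the right-hand side equals $1 + (k-4)/4 = k/4$, closing the induction. The base cases $k \leq 0$ are immediate from the convention, and a quick check of $k=1$ confirms consistency: $X+Y \geq 2 > 1$ forces $h(1 - X - Y) = 0$, so Lemma~\ref{lem:recur} gives $h(1) \geq 1 \geq 1/4$.

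I do not anticipate any real obstacle; the only mild ``trick'' is recognizing that the $h \equiv 0$ convention on non-positives is precisely what makes the pointwise dominance hold on the overshoot event $\{X+Y > k\}$, so that one can pull the lower bound inside the expectation without any case analysis. Everything else is an exercise in summing a geometric series and invoking Yao.
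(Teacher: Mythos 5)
Your proposal is correct and follows essentially the same route as the paper: induction on $k$ using Lemma~\ref{lem:recur} together with $\E[X]=\E[Y]=2$ to get $h(k)\geq 1+(k-4)/4=k/4$. The only difference is that you explicitly justify applying the bound $h(k-X-Y)\geq(k-X-Y)/4$ pointwise on the overshoot event $\{X+Y>k\}$ via the convention $h\equiv 0$ on non-positive arguments, a detail the paper's proof leaves implicit.
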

\begin{proof}
We show by induction that $h(k)\geq k/4$. By Lemma \ref{lem:recur}
and the induction hypothesis,
\begin{eqnarray*}
h(k) & \geq & 1+\mathbb{\E}_{X,Y}[h(k-X-Y)]\\
 & \geq & 1+\mathbb{\E}_{X,Y}[(k-X-Y)/4]\\
 & = & 1+k/4-\mathbb{\E}[X]/4-\mathbb{\E}[Y]/4\\
 & = & k/4
\end{eqnarray*}
as desired.
\end{proof}
Readers may have noticed that the proofs of the preceding two lemmas
essentially imply that $h(k)$ is roughly the expected number of geometric
random variables needed to sum up to $k$. One can use this property
together with some concentration inequality for geometric random variables
to establish a high probability version of our lower bound.

\section*{Acknowledgments}

This work was partially supported by NSF awards 0843915, 1111109,
CCF0964033, CCF1408635 and Templeton Foundation grant 3966. Part of
this work was done while the first three authors were visiting the
Hausdorff Research Institute for Mathematics in Bonn for the Workshop
on Submodularity, and the last three authors were visiting the Simons
Institute for the Theory of Computing in Berkeley. We thank the organizers
of the workshop for inviting us. We thank Elad Hazan and Dan Garber
for helpful preliminary discussions regarding approximate SFM. We
thank the anonymous reviewers for their helpful comments and in particular
for pointing us to needed references and previous work as well as
pointing us to the relationship between our work and graph optimization,
encouraging us to write Appendix~\ref{sec:minimum_cut}. A special
thanks to Bobby Kleinberg for asking the question about approximate
SFM.

\bibliographystyle{plain}
\bibliography{submodular}

\appendix

\section{Reduction from Multiplicative to Additive Approximation}

\label{app:mult_to_add}

Here we show how to obtain a multiplicative approximation for SFM
from our $\tilde{O}(n^{5/3}\cdot\time/\varepsilon^{2})$ additive-approximate
SFM algorithm. Because the minimizer of $f$ is scale- and additive-invariant,
it is necessary to make certain regularity assumptions on $f$ to
get a nontrivial result. This is akin to submodular function maximization
where constant factor approximation is possible only if $f$ is nonnegative
everywhere \cite{BFNS12,FMV07}. For SFM, by considering $f-\OPT$
we see that finding a multiplicative-approximate solution and an exact
solution are equivalent for general $f$. (Indeed most submodular
optimization problems permit multiplicative approximation only in
terms of the range of values.)

Similar to submodular maximization, we assume $f$ to be \textit{nonpositive}.
Then $f'=f/\OPT$ has range $[-1,0]$ and has minimum value -1 so
our additive-approximate algorithm immediately yields multiplicative
approximation. This requires knowing $\OPT$ (or some constant factor
approximation of). Alternately we can ``binary search'' to get factor-2
close to $\OPT$ by trying different powers of 2. This would lead
to a blowup of $O(\log\OPT)$ in the running time.

\section{Approximate SFM via Fujishige-Wolfe }

\label{app:approx_sfm_wolfe}

Here we show how Frank-Wolfe and Wolfe can give $\eps$-additive approximations
for SFM. We know that both algorithms in $O(1/\delta)$ iterations
can return a point $x\in B_{f}$, the base polyhedron associated with
$f$, such that $x^{\top}x\leq p^{\top}p+\delta$ for all $p\in B_{f}$.
Here we are using the fact implied by Lemma \ref{lem:bounded-subgradients}
that the diameter of the base-polytope for functions with bounded
range is bounded (note that vertices of the base polytope correspond
to gradients of the Lovasz extension.) The robust Fujishige Theorem
(Theorem 5, \cite{CJK14}) implies that we can get a set $S$ such
that $f(S)\le\OPT+2\sqrt{n\delta}.$ Setting $\delta=\eps^{2}/4n$
gives the additive approximation in $O(n\eps^{-2})$ gradient calls.

\section{Faster Algorithm for Directed Minimum Cut}

\label{sec:minimum_cut}

Here we show how to easily obtain faster approximate submodular minimization
algorithms in the case where our function when the funciton is an
explicitly given $s$-$t$ cut function. This provides a short illustration
of the reasonable fact that when given more structure, our sumbodular
minimization algorithms can be improved.

For the rest of this section, let $G=(V,E,w)$ be a graph with vertices
$V$, directed edges $E\subseteq V\times V$, and edge weights $w\in\R_{\geq}^{E}0$.
Let $s,t\in V$ be two special vertices, $A\defeq V\setminus\{s,t\}$,
and for all $S\subseteq A$ let $f(S)$ be defined as the total weight
of the edges in leaving the set $S\cup\{s\}$, i.e. where the tail
of edge is in $S\cup\{s\}$ and the head of the edges is in $V\setminus(S\cup\{s\})$.
The function $f$ is a well known submodular function and minimizing
it corresponds to computing the minimum $s$-$t$ cut, or correspondingly
the maximum $s$-$t$ flow. 

Note that clearly, $f(S)\leq W$ where $W=\sum_{e\in E}w_{e}$. Furthermore,
if we pick $S$ by including each vertex in $A$ randomly to be in
$S$ with probability independently $\frac{1}{2}$ then we see that
$\E f(S)=\frac{1}{2}W$. Consequently, $\frac{1}{2}W\leq\max_{S\subseteq A}f(s)\leq W$
and if we want to scale $f$ to make it have values in $[-1,1]$ we
need to devide by something that is $W$ up to a factor of $2$.

Now, note that we can easily extend this problem to a continuous problem
over the reals. Let $x^{+}$ denote $x$ if $x\ge0$ and $0$ otherwise.
Furthermore, for all $x\in\R^{A}$ let $y(x)\in\R^{V}$ be given by
$y(x)_{i}=x_{i}$ if $i\in A$, $y(x)_{s}=0$, $y(x)_{t}=1$, and
let 
\[
g(x)\defeq\sum_{(a,b)\in E}w_{ab}(y(x_{b})-y(x_{a}))^{+}\,.
\]
Clearly, minimizing $g(x)$ over $[0,1]^{A}$ is equivalent to minimizing
$f(S)$. Furthermore the subgradient for $g$ decomposes into subgradients
for each edge $(a,b)\in E$ each of which is a vector with 2 non-zero
entries and norm at most $O(w_{ab})$. If we picking a random edge
with probability proportional to $w_{ab}$ and output its subgradient
scaled by $W/w_{ab}$ subgradient this yields a stochastic subgradient
oracle $\tilde{g}(x)$ with $\E\norm{\tilde{g}(x)}_{2}^{2}=O(\sum_{(a,b)\in E}\frac{w_{ab}}{W}((W/w_{ab})\cdot w_{ab})^{2})=O(W^{2})$.
Consequently, by Theorem~\ref{thm:subgradient_descent} setting $R^{2}=O(|V|)$
we see that we can compute $z$ with $g(z)-\min_{x}g(x)\leq W\epsilon$
in $O(|v|\epsilon^{-2})$. Thus, if we scaled $g$ to make it $[-1,1]$
valued the time to compute an $\epsilon$-approximate solution would
be $O(|V|\epsilon^{-2})$.

This shows that an explicit instance of minimum $s$-$t$ cut does
not highlight the efficacy of the approach in this paper. Instantiating
our algorithm naively would give an $\otilde(|E|\cdot|V|^{5/3}\cdot\epsilon^{-2})$
to achieve additive error $\epsilon$. Nevertheless, even for such
an instance if instead we were simply given access to the an $\time$
time evaluation oracle for $f$, and the graph was desne, even in
this instance, without knowing the structure aprior we do not know
how to improve upon the $O(\time\cdot|V|^{5/3}\epsilon^{-2})$ time
bound achieved in this paper (though no serious attempt was made to
do this). In short there may be a gap between explicitly given structured
instances of submodular functions and algorithms that work with general
evaluation oracles as focused on in this paper.

\section{Certificates for Approximate SFM }

\label{app:approx_sfm_certificates}

The only certificate we know to prove that the optimum value of SFM
is $\geq F$ is to show a certain vector $x$ lies in the base polyhedron.
For example, one proof via Edmond's Theorem \cite{E70} is by demonstrating
$x\in B_{f}$ whose negative entries sum to $\geq F$. The only way
to do this is via Carathedeory's Theorem which requires $n$ vertices
of $B_{f}$, each of which requires $n$ function evaluations. For
approximate SFM, one thought might to be to use approximate Caratheodory's
Theorems \cite{Berman15,MPVW15} to describe a nearby point $x'$.
Unfortunately, for $\eps$-additive SFM approximation, one needs $x'$and
$x$ to be close in $\ell_{1}$-norm and approximate Caratheodory
works only for $\ell_{2}$-norm and higher. If one uses the $\ell_{2}$-norm
approximation, then unfortunately one doesn't get anything better
than quadratic. More precisely, approximate Caratheodory states that
one can obtain $||x'-x||_{2}\leq\delta$ with support of $x'$ being
only $O(1/\delta^{2})$-sparse. But to get $\ell_{1}$ approximations,
we need to set $\delta=\eps\sqrt{n}$ leading to linear sized support
for $x'.$ The approximate Caratheodory Theorems are tight \cite{MPVW15}
for general polytopes. Whether one can get better theorems for the
base polyhedron is an open question.

\section{Pseudocodes for Our Algorithms}

\label{app:algorithm_pseudocode}

We provide guiding pseudocodes for our two algorithms.

\label{alg:exact}
\begin{algorithm}[H]
\label{alg:exact-1}\textbf{Initialization.}
\begin{itemize}
\item \setlength{\itemsep}{-1mm}$x^{(1)}\defeq0^{n}$
\item Evaluate $g^{(1)}$ is the Lovasz subgradient at $x^{(1)}$. \emph{(Takes
$O(n\cdot\time)$ time. Store as (coordinate, value) pair in set $S^{(1)}.$
$|S^{(1)}|\leq3M.$ )}
\item Store $x^{(1)}$ in a balanced Binary search tree. At each node store
the \textbf{value} that is the sum of the gradient coordinates corr.
to children in the tree. \emph{(Takes $O(n)$ time to build.)}
\item Set $T\defeq20nM^{2}.$ Set $\eta\defeq\frac{\sqrt{n}}{18M}$.
\end{itemize}
\textbf{For }$t=1,2,\ldots,T:$ 
\begin{itemize}
\item \setlength{\itemsep}{-1mm}Define $e^{(t)}$ which is non-zero in
coordinates corresponding to $S^{(t)}$\emph{: (Takes time $|S^{(t)}|\le3M$
.)}

\begin{itemize}
\item if $g_{i}^{(t)}>0,$ then $e_{i}^{(t)}=\min(x_{i}^{(t)},\eta g_{i}^{(t)})$
\item if $g_{i}^{(t)}<0,$ then $e_{i}^{(t)}=\max(x_{i}^{(t)}-1,\eta g_{i}^{(t)})$
\end{itemize}
\item \textbf{Update}($x^{(t)},e^{(t)},S^{(t)}$) to get $(x^{(t+1)},g^{(t+1)},S^{(t+1)}$)
where $g^{(t+1)}$ is stored as coordinate,value pairs in $S^{(t+1)}.$
as described in Lemma \ref{lem:subgradient-update}. \emph{(Update
takes time $O(M\log n+M\cdot\time+M\cdot\time\log n)$)}
\end{itemize}
Obtain the $O(n)$ sets given\textbf{ }the order\textbf{ }of \textbf{$x_{T}$},
that is, if $P$ is the permutation corresponding to $x_{T}$, then
the sets are $\{P[1],\ldots,P[n]\}$. Return the minimum valued set
among them.\textbf{ }

\caption{Near Linear Time Exact SFM Algorithm. }
\end{algorithm}

\begin{algorithm}[H]
\textbf{Initialization}
\begin{itemize}
\item \setlength{\itemsep}{-1mm}Set $N\defeq10n\log^{2}n\eps^{-2},$ $T=\ceil{n^{1/3}}$
\item Initialize $x$ as the all zeros vector and store it in a BST. 
\end{itemize}
For $i=1,2,\ldots N/T:$
\begin{itemize}
\item \setlength{\itemsep}{-1mm}$x^{(1)}\defeq$the current $x$.
\item Compute $g^{(1)}$, the gradient to the Lovasz extension given $x^{(1)}.$
\emph{//This takes $O(n\time)$ time).}
\item Sample $z^{(1)}$ by picking $j\in[n]$ with probability proportional
to $\abs{g_{j}^{(1)}}$ and returning $z^{(1)}\defeq\norm{g^{(1)}}_{1}sign(g_{j}^{(1)})\cdot\mathbf{1}_{j}$.
\emph{//This takes $O(n\time)$ time.}
\item Set $\tilde{g}^{(1)}\defeq z^{(1)}.$
\item For $t=1,2,\ldots,T:$

\begin{itemize}
\item \setlength{\itemsep}{-1mm}Define $e^{(t)}$ as in Algorithm 1 using
$\tilde{g}^{(t)}$ instead of $g^{(t)}.$\emph{ //This takes time
$O(\supp(\tilde{g}^{(t)}))$ which will be $O(t^{2})$}
\item Obtain $z^{(t)}$ using \textbf{Sample}($x^{(t)},e^{(t)},\ell=t)$
where \textbf{Sample }is the randomized procedure describe in Lemma
\ref{lem:subgradient-update_additive-1}. \emph{//This takes $O(t^{2}\time\log n)$
time.}
\item Update $\tilde{g}^{(t+1)}\defeq\sum_{s\leq t}z^{(s)}$. \emph{//This
takes $O(t^{2}\log n)$ time to update the relevant BSTs.}
\end{itemize}
\item Set current $x$ to $x_{T}.$
\end{itemize}
Obtain the $O(n)$ sets given\textbf{ }the order\textbf{ }of the final
$x$, that is, if $P$ is the permutation corresponding to $x$, then
the sets are $\{P[1],\ldots,P[n]\}$. Return the minimum valued set
among them.\textbf{ }

\caption{Subquadratic Approximate SFM Algorithm.}
\end{algorithm}

\end{document}